\documentclass[a4paper,fleqn]{cas-dc}

\usepackage[numbers]{natbib}
\usepackage[capitalise]{cleveref}
\usepackage{marginnote}
\usepackage[dvipsnames]{xcolor}
\usepackage{courier}
\usepackage{amsmath}
\usepackage{xfrac}
\usepackage{amsthm}
\usepackage{amssymb}
\usepackage{mathtools}
\usepackage{units}
\usepackage{booktabs}
\usepackage{cleveref}
\usepackage{bm}
\usepackage[acronym]{glossaries}
\usepackage{graphicx}
\usepackage{subcaption}
\usepackage{caption}
\usepackage{float}
\usepackage{stfloats}   
\usepackage{placeins}
\graphicspath{{Figures/}}

\newif\ifmargincomments 
\margincommentstrue

\newtheorem{lemma}{Lemma}
\newtheorem{theorem}{Theorem}

\newtheorem{prob}{Problem}

\theoremstyle{definition}

\theoremstyle{remark}

\newif\ifextendedversion 
\extendedversionfalse

\ifmargincomments
\else

\fi

\newif\ifjournal
\journaltrue     

\def\tsc#1{\csdef{#1}{\textsc{\lowercase{#1}}\xspace}}
\tsc{WGM}
\tsc{QE}

\begin{document}
\let\WriteBookmarks\relax
\def\floatpagepagefraction{1}
\def\textpagefraction{.001}

\shorttitle{Power Management Policy for Hybrid Aero Engines}    

\shortauthors{F. Pak, U. Perra, F. Willems, T.  Hofman, M. Salazar}  

\title [mode = title]{An Optimal Power Management Policy for Hydrogen-based Hybrid Aero Engines}  

\tnotemark[1] 

\tnotetext[1]{Research funded by the European Union (ID 101138488) and by the UK Research and Innovation (ID 10106893) as a part of the FlyECO project. Views expressed are those of the author(s) only and do not necessarily reflect those of the funding entities. Neither the European Union nor the granting authority can be held responsible for them.~\cite{FLYECO}} 

%

\author[1]{Faezeh Pak}

\cormark[1]

\fnmark[1]

\ead{f.pak@tue.nl}


\credit{Conceptualization, Methodology, Data curation, Software implementation, Writing}

\affiliation[1]{organization={Eindhoven University of Technology},
            addressline={Groene Loper 3}, 
            city={Eindhoven},
            postcode={5612AE}, 
            state={North Brabant},
            country={The Netherlands}}

\author[1]{Uto Perra}

\fnmark[2]

\ead{u.perra@tue.nl}


\credit{Conceptualization, Data curation, Software implementation, Writing, Review and editing}


\author[1]{Frank Willems}

\fnmark[3]

\ead{F.P.T.Willems@tue.nl}


\credit{Co-author, Proof-read, Writing and review, Supervision}


\author[1]{Theo Hofman}

\fnmark[4]

\ead{T.Hofman@tue.nl}


\credit{Co-author, Proof-read, Review and editing, Supervision}


\author[1]{Mauro Salazar}

\fnmark[5]

\ead{m.r.u.salazar@tue.nl}


\credit{Methodology, Writing, Proof-read, Review and editing, Supervision}


\cortext[1]{Corresponding author}



\begin{abstract}
	This paper presents a power management policy for a hydrogen-based hybrid aero engine combining a gas turbine and a solid oxide fuel cell (SOFC).
	Specifically, we first identify a quadratic quasi-steady-state model of the propulsion system and formulate the minimum-fuel optimal control problem as a function of the power split between gas turbine and SOFC that captures the interconnections between the components and accounts for their operational limits.
	Second, leveraging the Karush-Kuhn-Tucker optimality conditions and partial convexity and monotonicity model properties, we compute the globally optimal steady-state power split for the different phases of the flight in closed form.
	Finally, we verify this power management policy with a high-fidelity integrated static model 
	across different flight phases, revealing in less than $1.5 \%$ normalized root mean square error in power allocation and less than $0.7 \%$ in predicted fuel consumption.
	Our results show that the optimal power management policy can be translated into a heuristic control law requesting the highest SOFC power that does not exceed its maximum operating temperature, ultimately paving the way for minimal-effort on-board implementations.
\end{abstract}


\begin{highlights}
\item We develop an optimization problem for a hydrogen hybrid aero engine by using curvature-constrained regression to ensure uniqueness of the solution.
\item We formulate an optimal power management problem using the tractable formulation and solved analytically using Karush-Kuhn-Tucker condition. 
\item A closed-form analytical power management is derived. It enables real-time implementation without iterative numerical calculations.
\item The validation against high fidelity models demonstrates less than $1.5 \%$ normalized root mean square error in power allocation and lower than $0.7 \%$ normalized root mean square error in fuel consumption.
\end{highlights}


\begin{keywords}
Power management \sep Optimal control \sep Solid oxide fuel cell \sep Hybrid propulsion system
\end{keywords}
\maketitle
\section{Introduction}
The aviation industry has committed to becoming climate neutral by 2050, partly by adopting \ifjournal a range of novel electrical \fi aerospace propulsion concepts  
\ifjournal
such as more-electric aircraft (MEA)~\citep{cheng2020}, hybrid-electric aircraft (HEA) \citep{reid2024}, and all-electric aircraft (AEA)~\citep{xia2013}. In this classification, MEA enhances subsystem electrification without fundamentally altering propulsion, e.g., hydraulic and pneumatic systems~\citep{li2021}, and AEA concepts eliminate combustion entirely. Although the AEAs can reach higher efficiency~\citep{xia2013}, their main drawbacks are their increasing mass due to the current battery technology and long charging time~\citep{li2021}. On the other hand, HEAs combine the conventional gas turbine engines with electrical power sources in order to enhance the efficiency, power density and feasibility. In such configurations, determining the optimal power allocation between subsystems becomes a key challenge~\citep{liang2023}.
%
\else
.
\fi

To contribute to the transition toward clean aviation, this study investigates a power management policy for a hybrid electric propulsion system as illustrated in~\cref{fig:scheme}, consisting of a hydrogen gas turbine (GT) and a solid oxide fuel cell (SOFC) that are thermodynamically coupled, to power a propeller via a gas turbine and an electric motor, respectively. 
\ifjournal 
The SOFC/GT configuration has only one power source, which is hydrogen here, still it lies within HEA concepts since it represents a combined-cycle implementation~\citep{geuther2020,BenitoEtAl2025}.
\fi
This complex aero engine requires a high-level power management policy to coordinate the components' operation and minimize hydrogen consumption while realizing the requested output power.  Although the analysis of such a system can be structured either in a steady‑state operation or a transient operation, the current work focuses exclusively on the steady‑state behavior of the propulsion system during take off, top of climb, and cruise, laying the foundation for future investigations into transient dynamics. 

\emph{Related Literature:}
Research in this area spans three streams: (i) control systems for hybrid and conventional aircraft engines, (ii) coordinated control of SOFC/GT integrated systems, and (iii) minimum-fuel strategies in automotive propulsion. 
In conventional aero-engines, optimization methods such as the maximum principle for the Mayer optimal control problem~\citep{CotsEtAl2018}, dynamic programming for power tracking~\citep{SunEtAl2021}, and iterative Hamilton-Jacobi-Bellman reformulations of minimum fuel optimal power tracking during transient operation~\citep{LiuEtAL2024} have been used to improve engine performance. 

For hybrid SOFC/GT propulsion, PID-based turbine speed control~\citep{ZaccariaEtAl2016} and neural-network multi-loop controllers with optimal tracking~\citep{RestrepoEtAl2016} have been developed for power tracking rather than fuel-optimal control. The studies on SOFC/GT configuration in different applications, including in power plants~\citep{KameswaranEtAl2007}, in the marine industry~\citep{MiEtAl2025}, vehicles~\citep{dimitrovaEtAl2017}, and in aviation~\citep{MotaponEtAl2013}, mainly rely on multi‑loop coordinated control architectures in which inner‑loop controllers regulate the local subsystem dynamics, while an outer-loop controller tracks the required power. 
\begin{figure*}[t]
	\begin{center}
		\includegraphics[width=15cm]{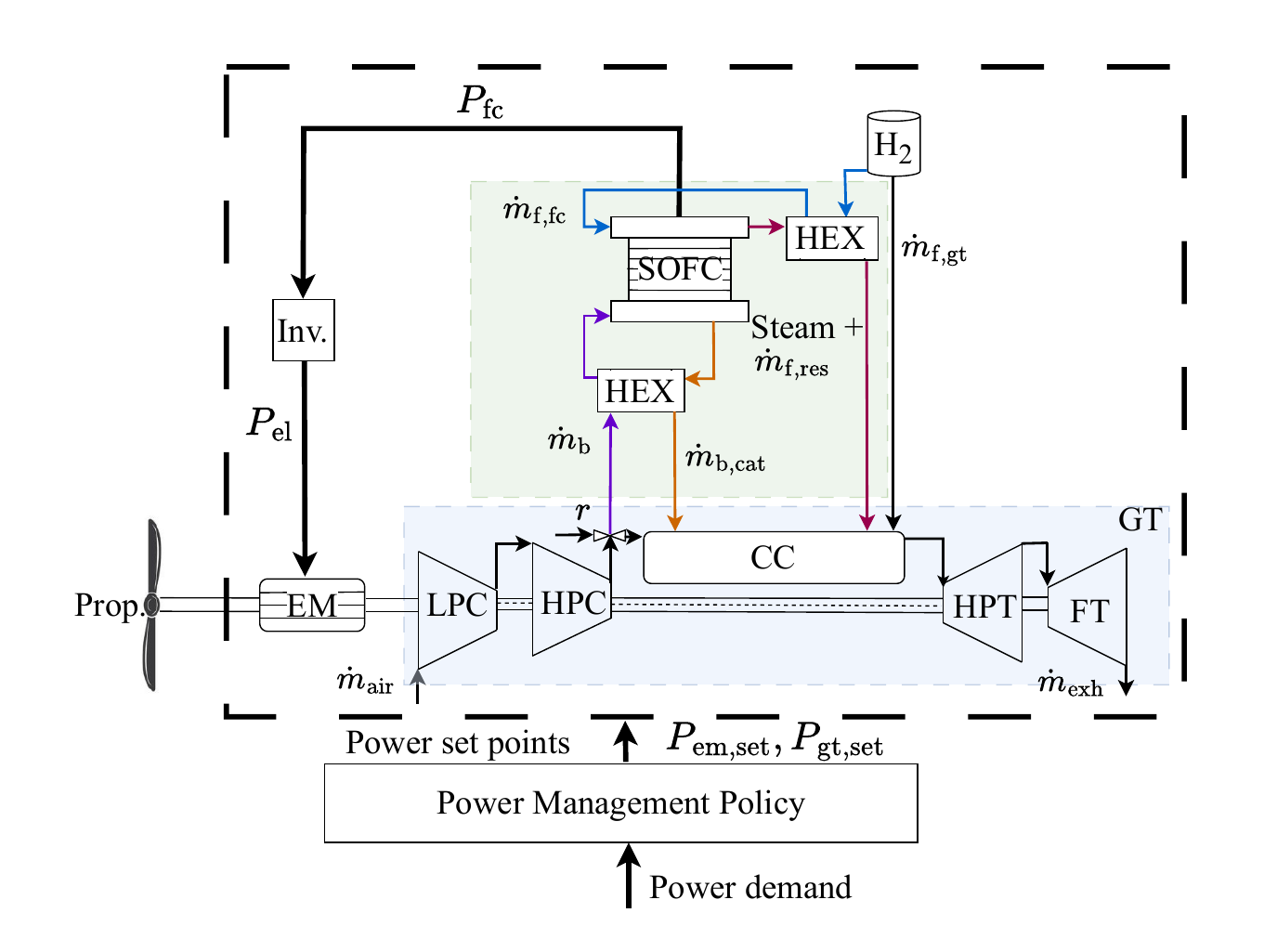}
		\ifjournal
		\else
		\vspace{-20pt}
   	    \fi
		\caption{Simplified schematic view of the aero engine, including a GT, the SOFC, and an electrical motor (EM)~\citep{BenitoEtAl2025}. In this figure, $\dot{m}_{\mathrm{f,gt}}$ and $\dot{m}_{\mathrm{f,fc}}$ are the fuel flow directed to the GT and SOFC, respectively. $\dot{m}_{\mathrm{b}}$ is the bleed air flow, which is guided from the compressor to the SOFC system after warm-up, and $\dot{m}_{\mathrm{f,res}}$ is the SOFC off-gas hydrogen. $P_{\mathrm{gt}}$, $P_{\mathrm{fc}}$, and $P_{\mathrm{em}}$ are GT output power, SOFC electrical power, and EM power, respectively.}
		\label{fig:scheme}
	\end{center}
\ifjournal
\else
\vspace{-20pt}
\fi
\end{figure*}
In these approaches, power allocation is typically generated by PID controller~\citep{ChenEtAl2017}, while the delay in SOFC temperature is not considered, or by optimal approaches using dynamic programming and NLP frameworks for power tracking and maximizing the efficiency~\citep{KameswaranEtAl2007} or via equilibrium-optimizer algorithms embedded in neural-network-based coordination scheme~\citep{MiEtAl2025} to avoid local minima in naval warship by fixing the fuel utilization of the SOFC and manipulating the current through changing the fuel flow rate. These controllers focus on meeting power demand rather than computing fuel-optimal supervisory policies for aviation propulsion. 

Automotive research offers numerically solved optimal control~\citep{WangEtAl2012}, or analytical optimal control such as ECMS~\citep{RezaeiEtAl2017}, which utilizes Pontryagin’s minimum principle (PMP)~\citep{sciarretta2014}, delivers fast real-time solutions and provides global guarantees under convex formulations \citep{bouwman:2017}, and rule-based methods \citep{hofman2006} offer robust sub-optimal performance. 
However, despite their relevance, such optimal and sub-optimal supervisory strategies have not been transferred to aerospace hybrid systems, which involve fundamentally different constraints and mission profiles.

\ifjournal
The analysis and control of propulsion systems are addressed at two levels: transient dynamics~\citep{WangEtAl2017} or steady-state operation~\citep{WenEtAl2026}. The transient models capture the dynamic response of the system and they are essential for the trajectory planning~\citep{CotsEtAl2018} and feedback control~\citep{SunEtAl2021}, but they result in high-dimensional nonlinear control problems which are computationally expensive. On the other hand, in decentralized control systems, the steady state formulation focuses on the quasi-static relationship between the power requirement and the subsystems operation, neglecting the dynamic effects strategically while maintaining the dominant thermodynamic interconnections. In hybrid SOFC/GT systems, this distinction is relevant due to the slow thermal dynamics of the fuel cell compared to the faster response of the gas turbine~\citep{ZaccariaEtAl2016}. Therefore, in this work we adopt a quasi-static modeling framework based on the steady-state operation to derive the analytical power management policy. 
\fi

Although extensive work exists on power‑tracking control, degradation mitigation, and multi‑loop coordination for SOFC/GT systems, to the best of the authors' knowledge no prior studies have addressed optimal supervisory power management aimed specifically at minimizing fuel consumption in SOFC/GT hybrid propulsion for aerospace applications. Existing optimal control methods either target efficiency or power tracking rather than fuel minimization, neglect critical SOFC thermal constraints, or have only been applied in automotive or stationary contexts.

\emph{Statement of Contribution:}
In this study, we formulate an optimal static feedforward control problem designed to determine such a policy, which is subsequently solved using the KKT conditions. This paper makes three main contributions: (i) We identify computationally tractable affine and quadratic surrogate models of the propulsion system in algebraic formulation by performing a parametric sweep of a high-fidelity simulation of the system in steady-state operation. The constraints for the regression curvature ensures that these models maintain the curvature properties for the analytical optimization. (ii) By using the resulted surrogate models with quasi-static approach, we formulate the static minimum-fuel problem that incorporates the system's thermal and operational constraints. By using the monotonicity and curvature behavior of the lifting variables, the problem can be expressed in a partially convex, one dimensional form. (iii) We eventually derive the optimal power management policy analytically by using the Karush-Kuhn-Tucker optimality conditions, resulting in a closed form supervisory policy that requires no iterative computations. This facilitates the real-time implementation while preserving consistency with physical limits.

\ifjournal
A preliminary version of this study 
is submitted to the 2026 Congress of the International Council of the Aeronautical Sciences~\citep{pak2026}. In the present extended version, we provided a comprehensive review of the relevant literature and offer a detailed exposition of the model structure for different phases. Based on the modeling assumptions and the structure inferred from the data we establish an argument regarding convexity of the resulting optimal problem. The purpose is to guarantee optimality in a specific domain. Furthermore, a complete derivation of the proposed policy is included in the Appendix to support the claims made in the main text.  
\fi

\emph{Organization:} The study is structured as follows: Section 2 outlines the problem and system models. Section 3 describes the optimal control formulation and analytical solution, Section 4 presents the results briefly to validate the methodology, Section 5 suggests future approaches, and Section 6 concludes with key findings and future directions.

\section{Problem Statement} \label{sec:ps}
\FloatBarrier
This section introduces the propulsion system architecture. Then, we present the system mathematical modeling approach, and then we formulate the objective function and the optimization problem for the optimal control problem.
\subsection{System Description} \label{sub:sd}
The hybrid propulsion system illustrated in~\cref{fig:scheme} consists of a GT and an electrical subsystem. The electrical subsystem is composed of an SOFC, an inverter, and an electrical motor.
The GT features low- and high-pressure compressor stages (LPC, HPC). A portion of the compressed air is routed to the SOFC cathode after warm-up through a heat exchanger (HEX), while hydrogen is supplied to the anode after a HEX. The electrochemical reactions in the SOFC generate electricity and produce steam. The SOFC off-gases are recirculated to the GT combustion chamber after moving through the HEX, where hydrogen is burned.
The recirculation of the air through the propulsion system introduces strong thermodynamic coupling between the GT and the SOFC. The supervisory power management policy determines the power distribution law between the gas turbine and the fuel cell.

\ifjournal
The following assumption had been taken into account for this problem:
	(i) all flight phases are modeled as steady state operating points with no dynamic state evolution, (ii) the surrogate models are assumed valid within the admissible power and thermal operating envelopes defined in~\cref{eq:optp}, (iii) the admissible fuel cell power range lies strictly below the peak of the concave SOFC inlet temperature surrogate.
\fi
\subsection{Mathematical Modeling}\label{sub:mod}
\ifjournal
	For modeling we first generate steady-state simulation data through a parametric sweep of a high fidelity thermodynamic model. Then, we identifying dominant dependencies to power allocation between gas turbine and the SOFC,
		 regression of affine or quadratic surrogate models, 
		we enforce the curvature constraints to preserve the geometry properties for the optimal control problem.

\fi
Based on the steady state dataset, we approximate the key thermodynamic and flow variables, i.e., lifting and input variables, of the hybrid SOFC/GT system using a quasi-static mapping from the gas turbine output power $P_{\mathrm{gt}}$ and the SOFC electrical power $P_{\mathrm{fc}}$. For each flight phase, we select the simplest functional form (affine or quadratic) that provides sufficient accuracy while maintaining the convexity and concavity structures of each variable for the subsequent optimization problem. This approach acknowledges that the operating conditions and the dominant system nonlinearities vary substantially between take off, top of climb, and cruise. 
We interpret each mapping as a reduced-order surrogate of the high-fidelity model, obtained by projecting the steady-state response manifold onto the low dimensional space of $P_{\mathrm{gt}}, P_{\mathrm{fc}}$. By defining $\mathbf{P}_k = \begin{bmatrix} P_{\mathrm{gt},k} & P_{\mathrm{fc},k}\end{bmatrix}^{\top}$, affine coefficients are obtained using
\begin{equation}\label{eq:lp}
\min_{\{\theta_1,\theta_2,\theta_0\}} 
\sum_{k=1}^{N} 
\left\| y_k - 
\left(\begin{bmatrix} \theta_1 & \theta_2 \end{bmatrix}
\mathbf{P}_k
+ \theta_0\right)
\right\|_2, \\
\end{equation}
while quadratic terms subject to curvature constraints are estimated via semidefinite programming of the form shown in~\cref{eq:qp}
\begin{equation}\label{eq:qp}
	\begin{aligned}
		&\min_{\{Q,\mathbf{q},q_0\}} 
		\sum_{k=1}^{N} 
		\left\| y_k - 
		\left(\mathbf{P}_k^\top
		Q
		\mathbf{P}_k
		+ \mathbf{q}^\top 
		\mathbf{P}_k
		+ q_0\right)
		\right\|_2,\\
		& \ \ \ \ \ \ \ \ \ \ \ \textnormal{s.t. }\  Q \succeq 0 \ \textnormal{or } \ Q \preceq 0.
	\end{aligned}
\end{equation}

The semidefinite constraints on $Q$ enforce convexity or concavity of the quadratic maps, ensuring compatibility with the optimality of the power management problem formulation.

Imposing convex or concave curvature on the surrogate models preserves the structure required for the partial convex optimization formulations in Section~\ref{sec:met}, guaranteeing global optimality and efficient solution of the power split optimization problem.
The model output vector is
\begin{equation}
\mathbf{x} \coloneqq
\begin{bmatrix}
\dot{m}_{\mathrm{b}} &
\dot{m}_{\mathrm{f,fc}} &
\dot{m}_{\mathrm{f,gt}} &
T_{\mathrm{in}} &
T_{\mathrm{hpc}} &
T_{\mathrm{et}} &
T_{\mathrm{out}}
\end{bmatrix}^{\!\top},
\end{equation}
where $T_{\mathrm{in}}$ is the SOFC inlet temperature, $T_{\mathrm{hpc}}$ is the HPC temperature, $T_{\mathrm{et}}$ is the turbine outlet temperature and $T_{\mathrm{out}}$ is the SOFC outlet temperature. Each variable $x_j$ of $\mathbf{x}$ is parameterized either in an affine manner
\begin{equation} 
x_j(\mathbf{p}) \;=\; \theta_{j,0} + \boldsymbol{\theta}_j^{\top}\,\mathbf{p},
\label{eq:affine}
\end{equation}
or in a quadratic form
\begin{equation}
x_j(\mathbf{p}) \;=\; \mathbf{p}^{\top} Q_j \mathbf{p} + \mathbf{q}_j^{\top}\mathbf{p} + q_{j,0},
\qquad Q_j \in \mathbb{S}^2,
\label{eq:quadratic}
\end{equation}
where $\mathbb{S}^2$ denotes the set of $2\times 2$ real symmetric matrices. $x_j$ is convex if $Q_j \succeq 0$ and concave if $Q_j \preceq 0$ and the vector $\mathbf{q}_j= \begin{bmatrix}q_{j,1}& q_{j,2}\end{bmatrix}^{\top}$ and scalar $q_{j,0}$ are fitting parameters. 
The selection is based on the local nonlinear characteristics of the steady state data for each flight phase.
\ifjournal
The constraints are not to perfectly model all the nonlinear thermodynamic effects, but rather to structure an approximation which creates a balance between model fidelity and optimization tractability. 
\fi
\ifjournal
\subsubsection{Take off}
During the take off phase, the fuel flow rate for the SOFC, the HPC temperature, the turbine exit temperature, and the SOFC outlet temperature are well approximated using affine functions, whereas the gas turbine fuel flow rate, the bleed air mass flow rate, and the SOFC inlet temperature exhibit nonlinear behavior and require quadratic forms. While the the quadratic models for GT fuel flow rate and the bleed air fuel flow  have convex formulations, the SOFC inlet temperature model is inherently concave, as indicated by negative semidefinite $Q$ matrix. The quadratic terms significantly improve modeling accuracy for these variables, especially under the take off high load condition. Therefore, the models for take off are
\begin{align}
\dot{m}_{\mathrm{f,fc}} = &b_0 + b_1\cdot P_{\mathrm{gt}} + b_2\cdot P_{\mathrm{fc}}, \label{eq:modelTO}\\
T_{\mathrm{et}} = &d_0 + d_1 \cdot P_{\mathrm{gt}} + d_2\cdot P_{\mathrm{fc}},\label{eq:etTO}\\
T_{\mathrm{out}} = &h_0 + h_1 \cdot P_{\mathrm{gt}} + h_2\cdot P_{\mathrm{fc}},\label{eq:toutTO}\\
\dot{m}_{\mathrm{f,gt}} = &\begin{bmatrix}
P_{\mathrm{gt}} & P_{\mathrm{fc}}
\end{bmatrix} Q_{\mathrm{f}} \begin{bmatrix}
P_{\mathrm{gt}} \\ P_{\mathrm{fc}}
\end{bmatrix} + \mathbf{q_{\mathrm{f}}}^{\top} \begin{bmatrix}
P_{\mathrm{gt}} \\ P_{\mathrm{fc}}
\end{bmatrix} + q_{\mathrm{f},0},\\
\dot{m}_{\mathrm{b}} = &\begin{bmatrix}
P_{\mathrm{gt}} & P_{\mathrm{fc}}
\end{bmatrix} Q_{\mathrm{a}} \begin{bmatrix}
P_{\mathrm{gt}} \\ P_{\mathrm{fc}}
\end{bmatrix} + \mathbf{q_{\mathrm{a}}}^{\top} \begin{bmatrix}
P_{\mathrm{gt}} \\ P_{\mathrm{fc}}
\end{bmatrix} + q_{\mathrm{a},0}, \label{eq:mbTO}\\
T_{\mathrm{hpc}} = &\begin{bmatrix}
	P_{\mathrm{gt}} & P_{\mathrm{fc}}
\end{bmatrix} Q_{\mathrm{h}} \begin{bmatrix}
	P_{\mathrm{gt}} \\ P_{\mathrm{fc}}
\end{bmatrix} + \mathbf{q}_{\mathrm{h}}^{\top} \begin{bmatrix}
	P_{\mathrm{gt}} \\ P_{\mathrm{fc}}
\end{bmatrix} + q_{\mathrm{h,0}},\label{eq:ThpcTO}\\
T_{\mathrm{in}} = &\begin{bmatrix}
P_{\mathrm{gt}} & P_{\mathrm{fc}}
\end{bmatrix} Q \begin{bmatrix}
P_{\mathrm{gt}} \\ P_{\mathrm{fc}}
\end{bmatrix} + \mathbf{q}^{\top} \begin{bmatrix}
P_{\mathrm{gt}} \\ P_{\mathrm{fc}}
\end{bmatrix} + q_0.\label{eq:TinTO}
\end{align}
Here, $\dot{m}_{\mathrm{f,fc}}$ is the SOFC fuel flow rate, $T_{\mathrm{hpc}}$ represents the high pressure compressor temperature, $T_{\mathrm{et}}$ corresponds to the turbine temperature, $T_{\mathrm{out}}$ refers to the SOFC outlet temperature at both anode and cathode side, $\dot{m}_{\mathrm{f,gt}}$ denotes the gas turbine fuel flow rate, $\dot{m}_{\mathrm{b}}$ is the bleed air flow rate and $T_{\mathrm{in}}$ is the SOFC inlet temperature. The coefficients $b_0,\, b_1,\, b_2$ are the regression parameters used in the approximation of $\dot{m}_{\mathrm{f,fc}}$, while $Q_{\mathrm{h}},\, \mathbf{q}_{\mathrm{h}},\, q_{\mathrm{h,0}}$ are the corresponding parameters for $T_{\mathrm{hpc}}$. Similarly, $d_0,\, d_1,\, d_2$ are the fitted coefficients for $T_{\mathrm{et}}$, and $h_0,\, h_1,\, h_2$ parameterize the model for $T_{\mathrm{out}}$. 
The~\crefrange{eq:modelTO}{eq:TinTO} model the relationship between the gas turbine output and fuel flow rate. This mapping is obtained from the high fidelity models parametric sweep and approximated using a set of convex and one monotonic concave functions.  
\subsubsection{Top of climb and cruise}
Under top of climb and cruise conditions, the system operates under more steadily and less demanding conditions. As a result, most variables, including fuel flow rates and temperatures, are approximated by affine models. Only the bleed air mass flow rate and the SOFC inlet temperature retain a nonlinear quadratic dependence. The models for top of climb and cruise are similar to take off for $\dot{m}_{\mathrm{f,fc}}, \ T_{\mathrm{et}}, \ T_{\mathrm{out}}, \ \dot{m}_{\mathrm{b}} \ $ and $T_{\mathrm{in}}$ as \cref{eq:modelTO,eq:etTO,eq:toutTO,eq:mbTO,eq:TinTO}, respectively, with different values for their coefficients. We model the rest of the variables linearly as
\begin{align} \label{eq:modelToCCR}
\dot{m}_{\mathrm{f,gt}} = &e_0 + e_1\cdot P_{\mathrm{gt}} + e_2\cdot P_{\mathrm{fc}},\\
T_{\mathrm{hpc}} = &t_0 + t_1 \cdot P_{\mathrm{gt}} + t_2\cdot P_{\mathrm{fc}}. \label{eq:ThpcTOC}
\end{align}
As in the take off case, the quadratic coefficient matrix $Q_{\mathrm{a}}$ associated with the bleed air flow rate is positive semidefinite, while the SOFC inlet temperature model again exhibits concavity, reflected by a negative semidefinite matrix $Q$. 
In the models we assumed: (i) the inlet temperature for both anode and cathode are identical, (ii) the outlet temperature for both anode and cathode are identical, (iii)  within each phase, the system performs in steady-state operation. Hence, a quasi-static mapping is appropriate.

Model fidelity is assessed using the normalized root-mean-square error (NRMSE) computed over the steady state dataset for each flight phase.
\ifjournal
Table~\ref{tab:NRME} summarizes the NRMSE values for all variables across take off, top of climb ans cruise.
All linear and semi definite programs used for surrogate model identification were solved using \textsc{MOSEK}~\citep{mosek} optimization solver. 
The models are identified using steady-state data and to ensure consistency between the regression models and the data, we explicitly define the admissible modeling domain
of the sampled operating points, means that
we ensure that all the interpolated operating points remain within the region spanned by the available data and extrapolation is not admissible.
\begin{table}[!b]
	\centering
	\caption{NRMSE for Input, Decision and Lifting Variables}
	\label{tab:NRME}
	\begin{tabular}{l|ccc}
		\toprule
		\textbf{Variable} & \textbf{Take off} & \textbf{Top of climb} & \textbf{Cruise} \\
		\midrule
		$\dot{m}_{\mathrm{f,gt}}$ & $0.6\%$ & $0.8\%$ & $0.9\%$ \\
		$\dot{m}_{\mathrm{f,fc}}$ & $0.6\%$ & $0.2\%$ & $0.4\%$ \\
		$\dot{m}_{\mathrm{b}}$    & $0.6\%$ & $1.1\%$ & $1.1\%$ \\
		$T_{\mathrm{in}}$         & $0.06\%$ & $0.2\%$ & $0.2\%$ \\
		$T_{\mathrm{hpc}}$        & $0.3\%$ & $0.3\%$ & $0.2\%$ \\
		$T_{\mathrm{et}}$         & $0.8\%$ & $0.4\%$ & $0.3\%$ \\
		$T_{\mathrm{out}}$	& $0.04\%$ & $0.5\%$ & $0.4\%$ \\
		\bottomrule
	\end{tabular}
\end{table}
\begin{figure*}[H]
	\begin{subfigure}[t]{0.48\textwidth}
		\centering
		\includegraphics[width=\linewidth]{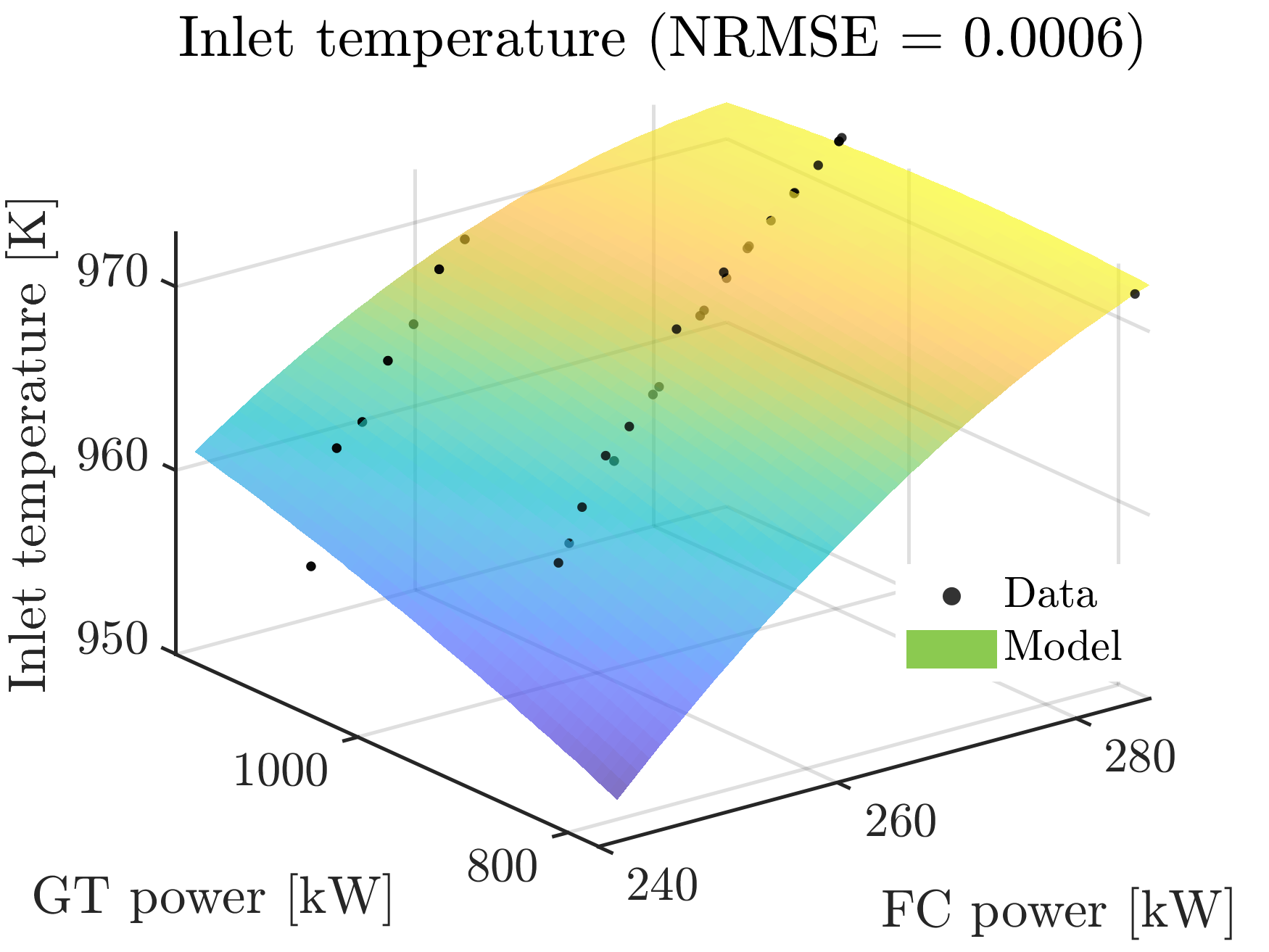}
		\caption{Take off map}
		\label{fig:ps_to}
	\end{subfigure} 
	\hfill
		\begin{subfigure}[t]{0.48\textwidth}
		\centering
		\includegraphics[width=\textwidth]{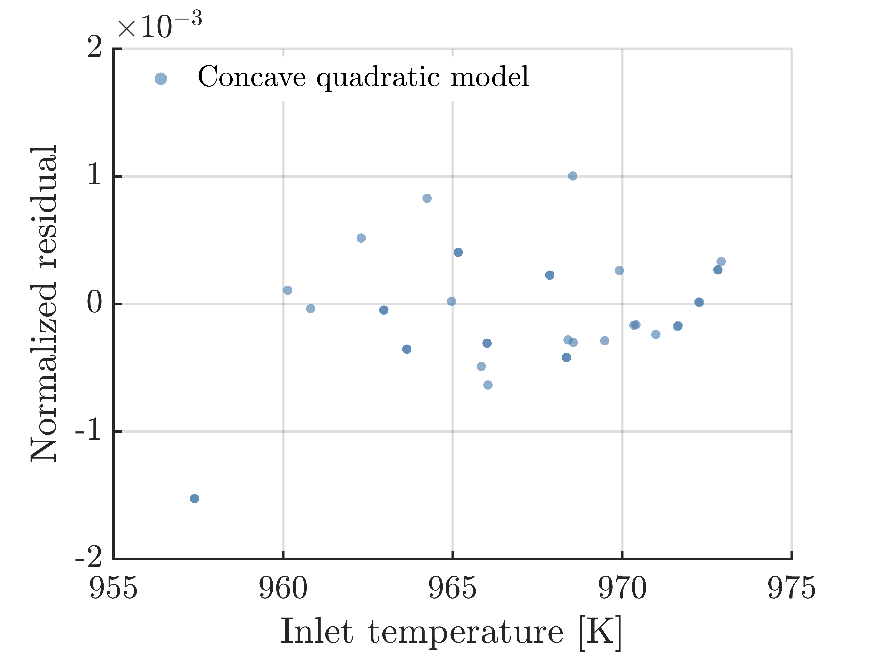}
		\caption{Take off modeling residual error}
		\label{fig:tintoer}
	\end{subfigure}
	\begin{subfigure}[t]{0.48\textwidth}
		\centering
		\includegraphics[width=\linewidth]{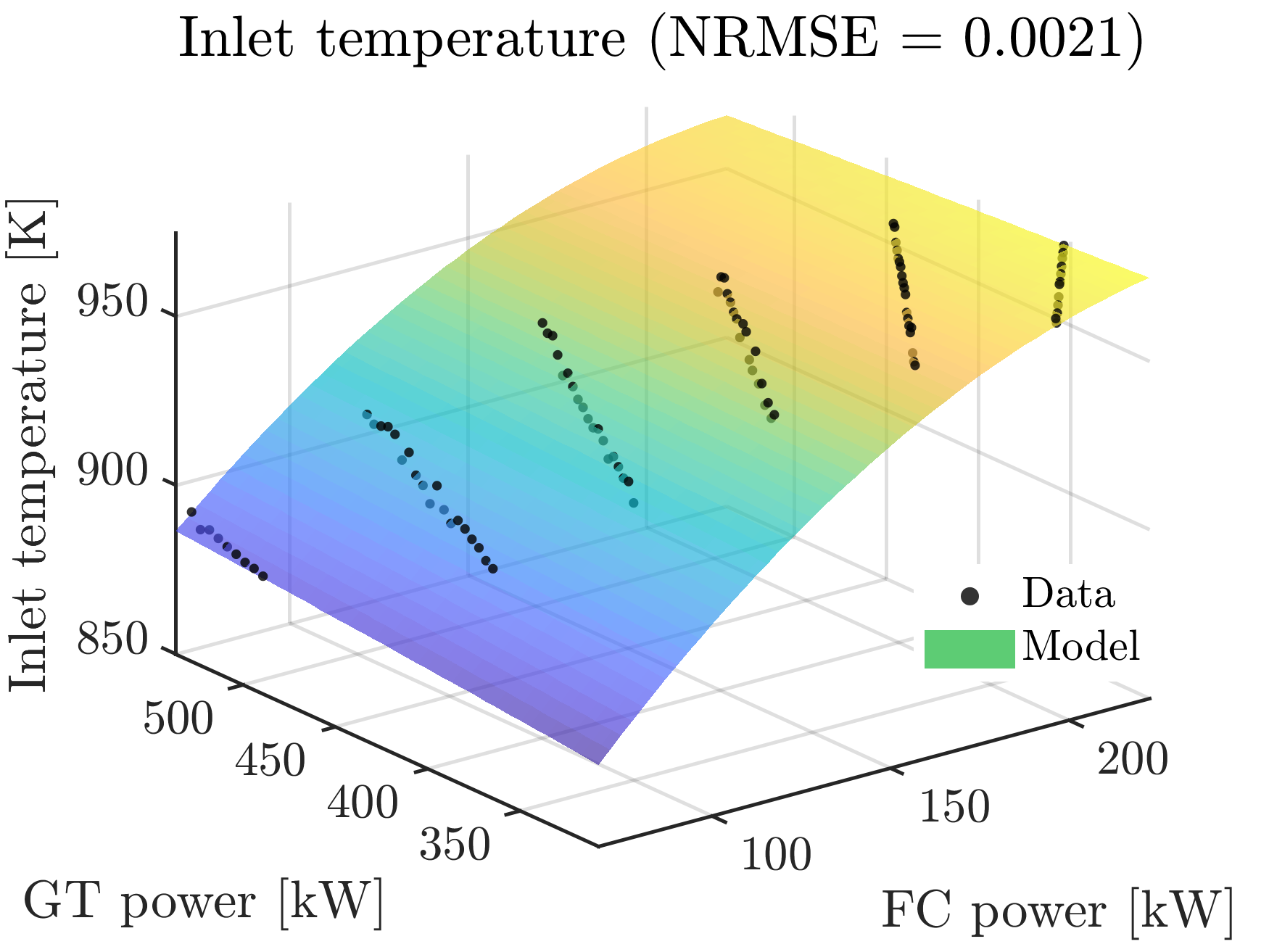}
		\caption{Top of climb map}
		\label{fig:ps_toc}
	\end{subfigure}
	\hfill
	\begin{subfigure}[t]{0.48\textwidth}
		\centering
		\includegraphics[width=\textwidth]{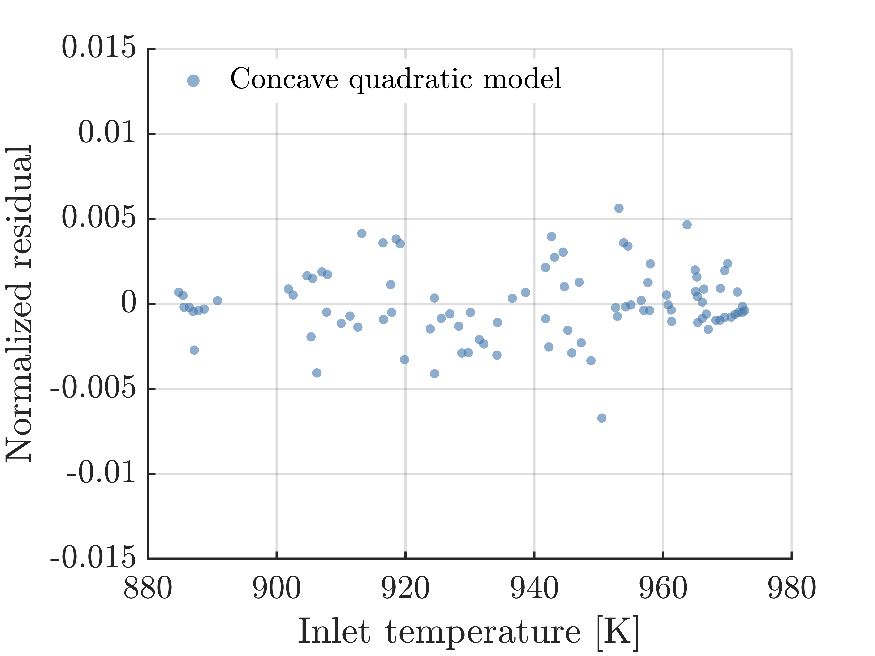}
		\caption{Top of climb modeling residual error}
		\label{fig:tintocer}
	\end{subfigure}
	\hfill
	\begin{subfigure}[t]{0.48\textwidth}
		\includegraphics[width=\linewidth]{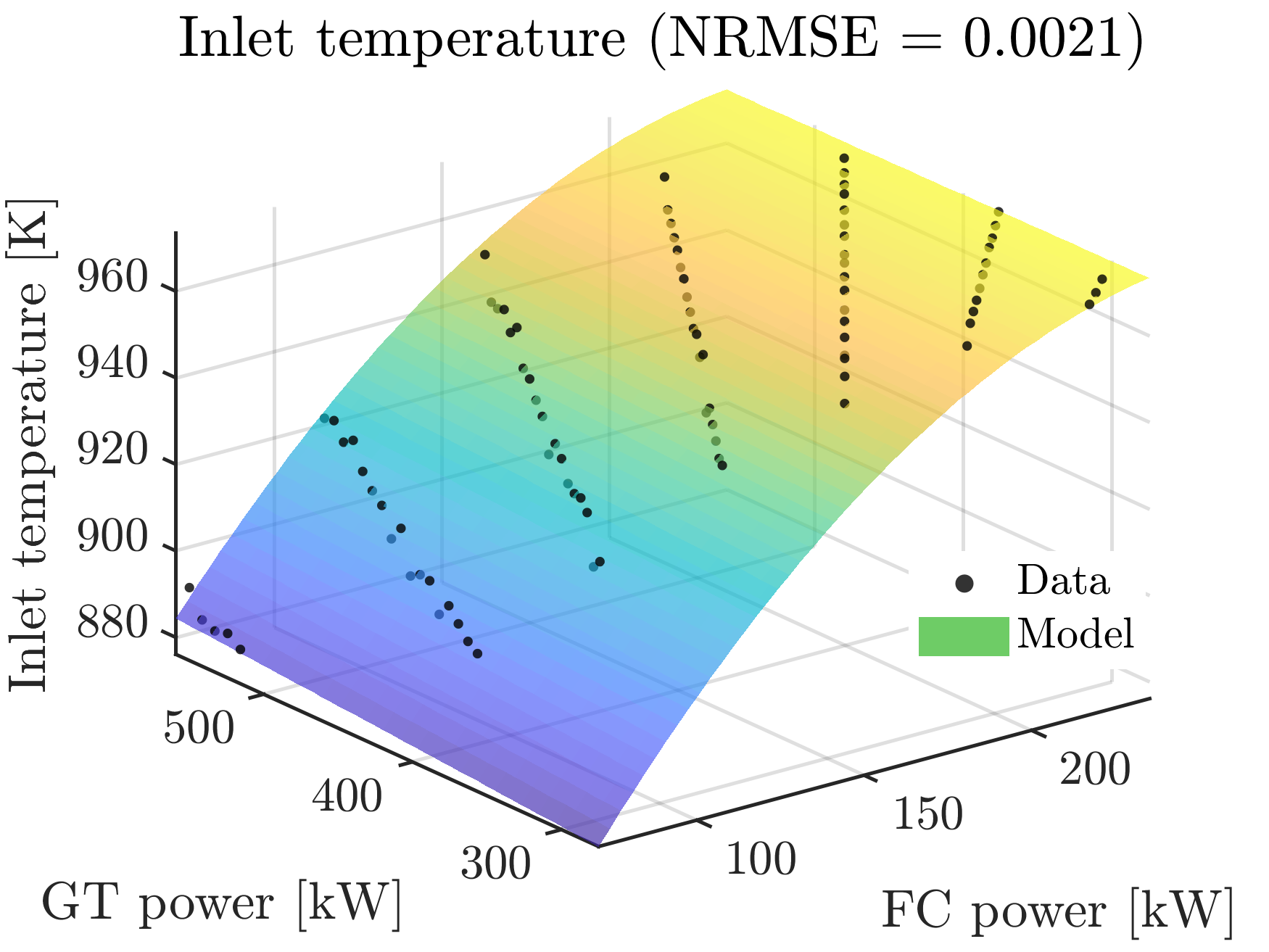}
		\caption{Cruise map}
		\label{fig:ps_cr}
	\end{subfigure}
		\hfill
	\begin{subfigure}[t]{0.48\textwidth}
		\centering
		\includegraphics[width=\textwidth]{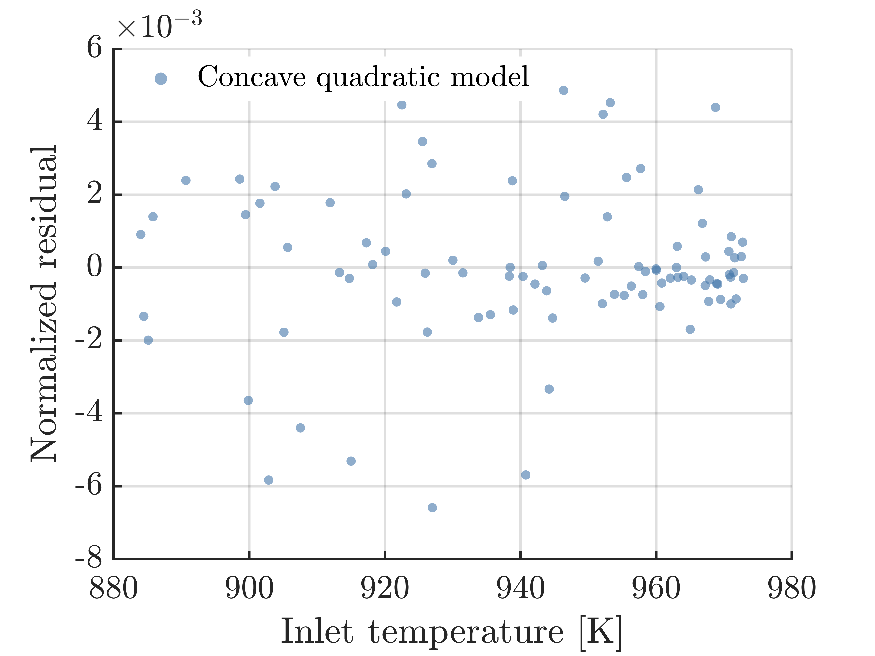}
		\caption{Cruise modeling residual error}
		\label{fig:tincrer}
	\end{subfigure}
	
	\caption{The SOFC inlet temperature map with respect to power allocation and their residual error scatter plots across flight phases.}
	\label{fig:inletT}
\end{figure*}
		%
\ifjournal
\clearpage
\begin{figure*}[H]
	\begin{subfigure}[t]{0.48\textwidth}
		\centering
		\includegraphics[width=\textwidth]{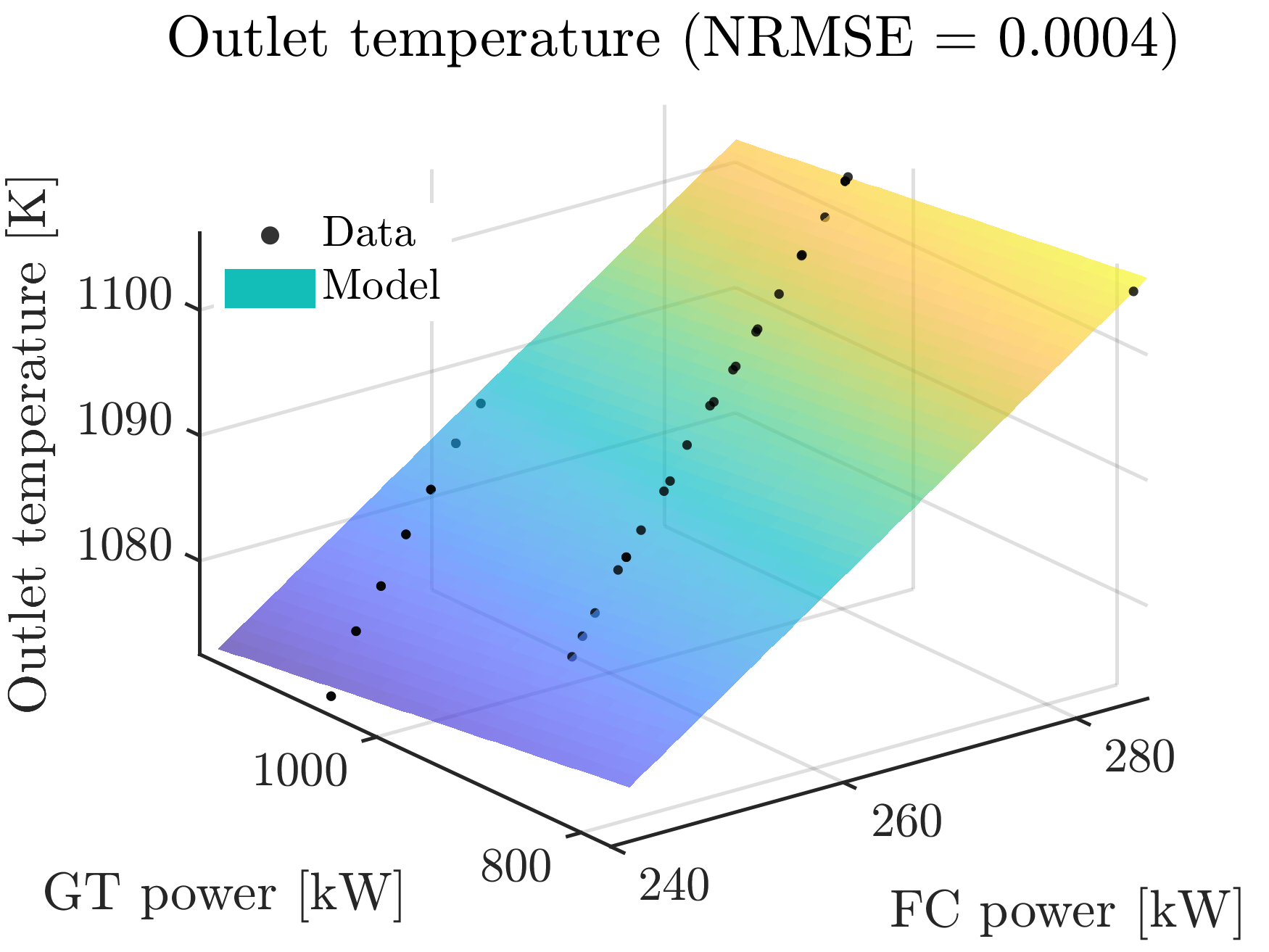}
		\caption{Take off}
		\label{fig:ps_to}
	\end{subfigure}
	\hfill
		\begin{subfigure}[t]{0.48\textwidth}
		\centering
		\includegraphics[width=\textwidth]{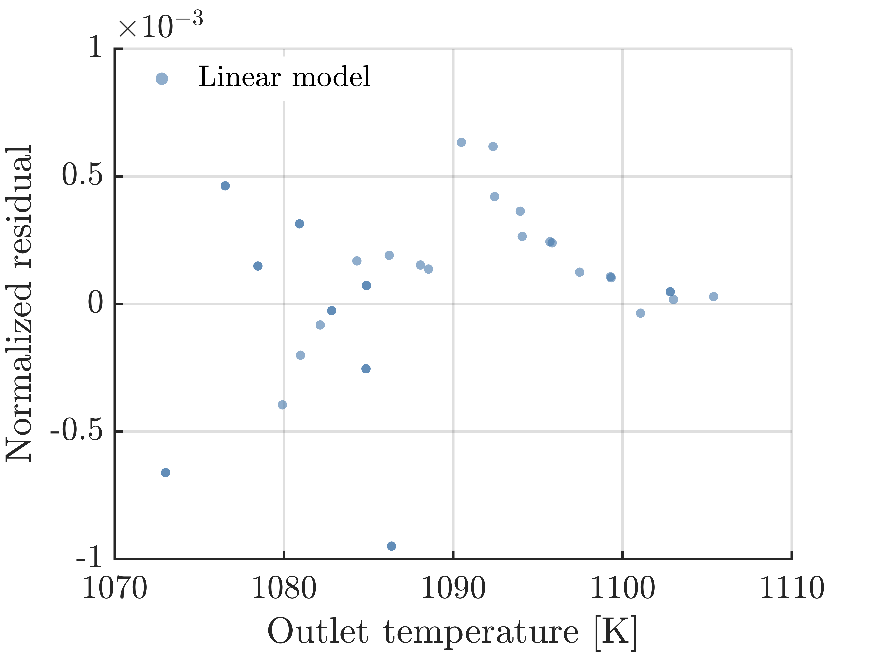}
		\caption{Take off modeling residual error}
		\label{fig:toresto}
	\end{subfigure}
	\hfill
	\begin{subfigure}[t]{0.48\textwidth}
		\centering
		\includegraphics[width=\textwidth]{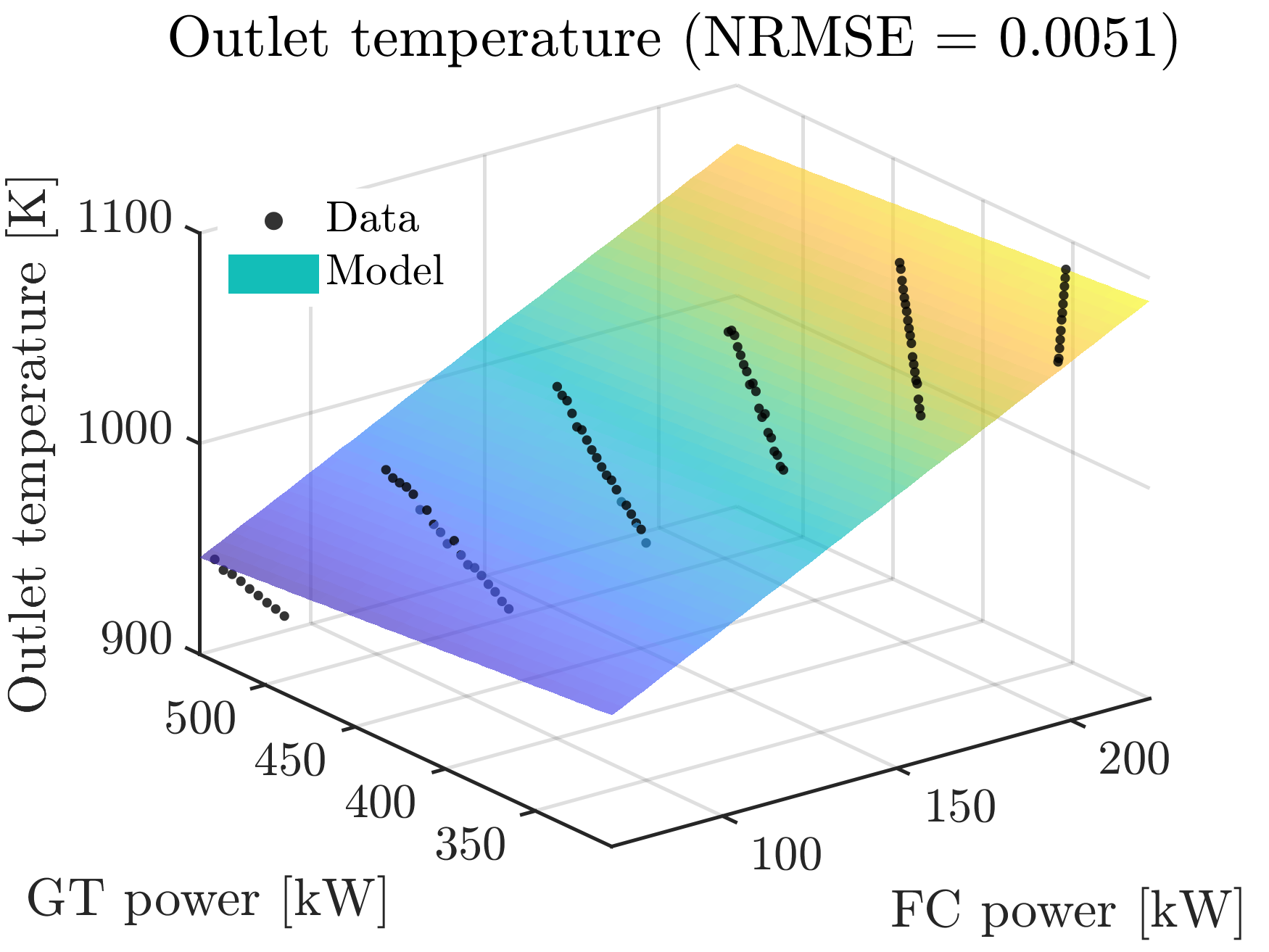}
		\caption{Top of climb}
		\label{fig:ps_toc}
	\end{subfigure}
	\hfill
		\begin{subfigure}[t]{0.48\textwidth}
		\centering
		\includegraphics[width=\textwidth]{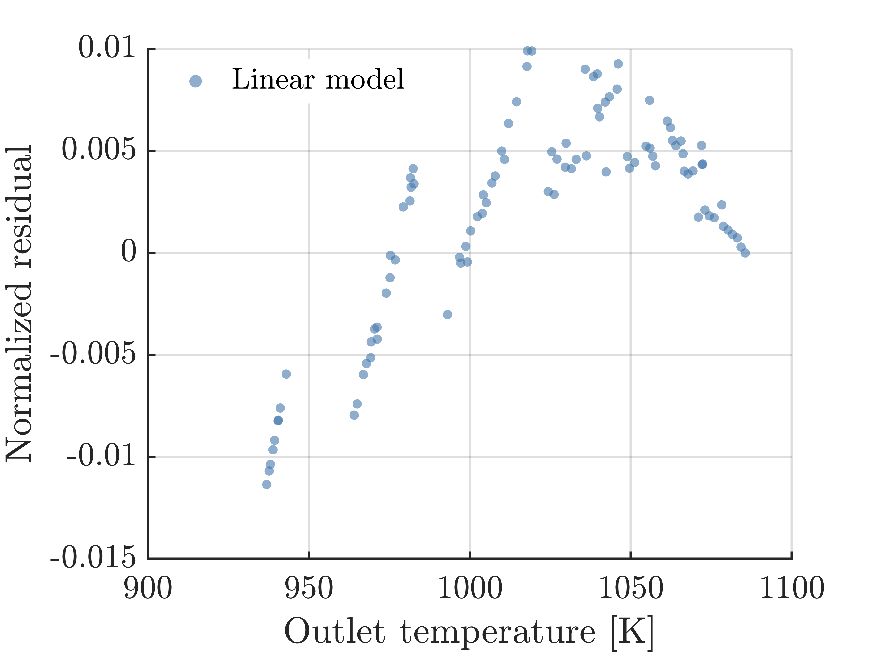}
		\caption{Top of climb modeling residual error}
		\label{fig:torestoc}
	\end{subfigure}
	\hfill
	\begin{subfigure}[t]{0.48\textwidth}
		\centering
		\includegraphics[width=\textwidth]{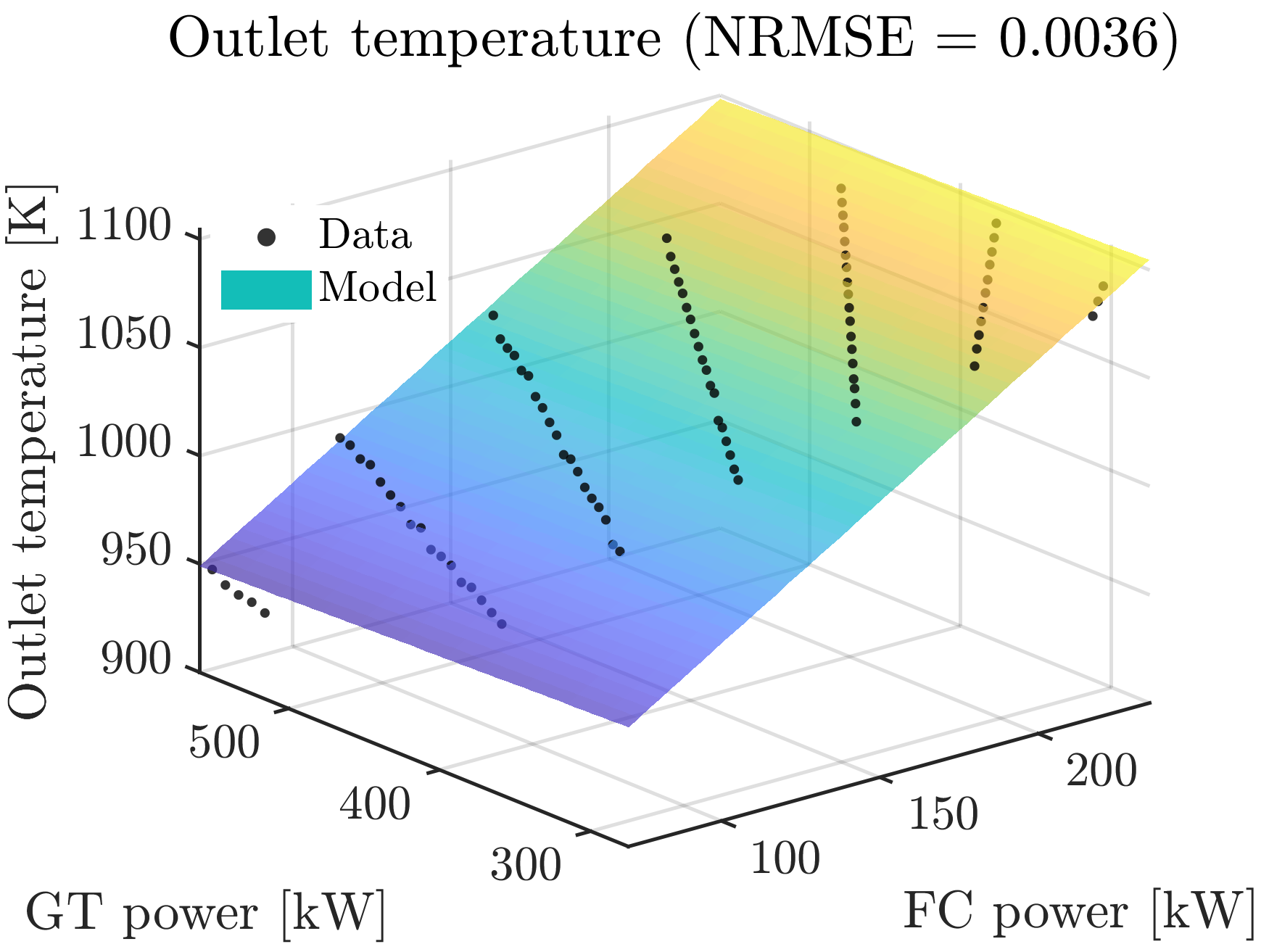}
		\caption{Cruise}
		\label{fig:ps_cr}
	\end{subfigure}
		\hfill
	\begin{subfigure}[t]{0.48\textwidth}
		\centering
		\includegraphics[width=\textwidth]{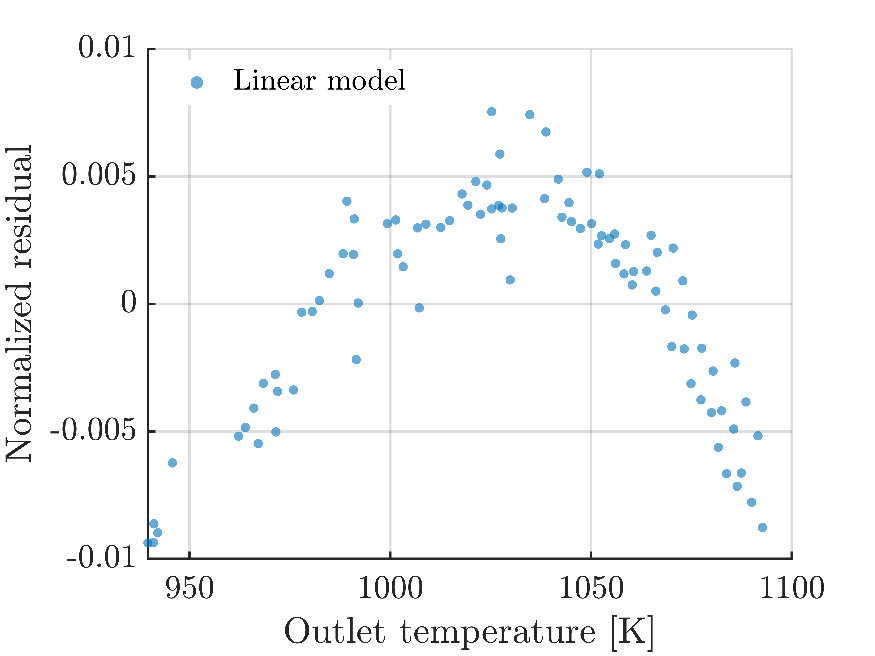}
		\caption{Cruise modeling residual error}
		\label{fig:torescr}
	\end{subfigure}
	
	\caption{SOFC outlet temperature maps as function of GT power and FC power and their residual error scatter plots across flight phases.}
	\label{fig:outletT}
\end{figure*}
		%
\fi
\clearpage
\begin{figure*}[H]
	\centering
	\begin{subfigure}[t]{0.48\textwidth}
		\centering
		\includegraphics[width=\textwidth]{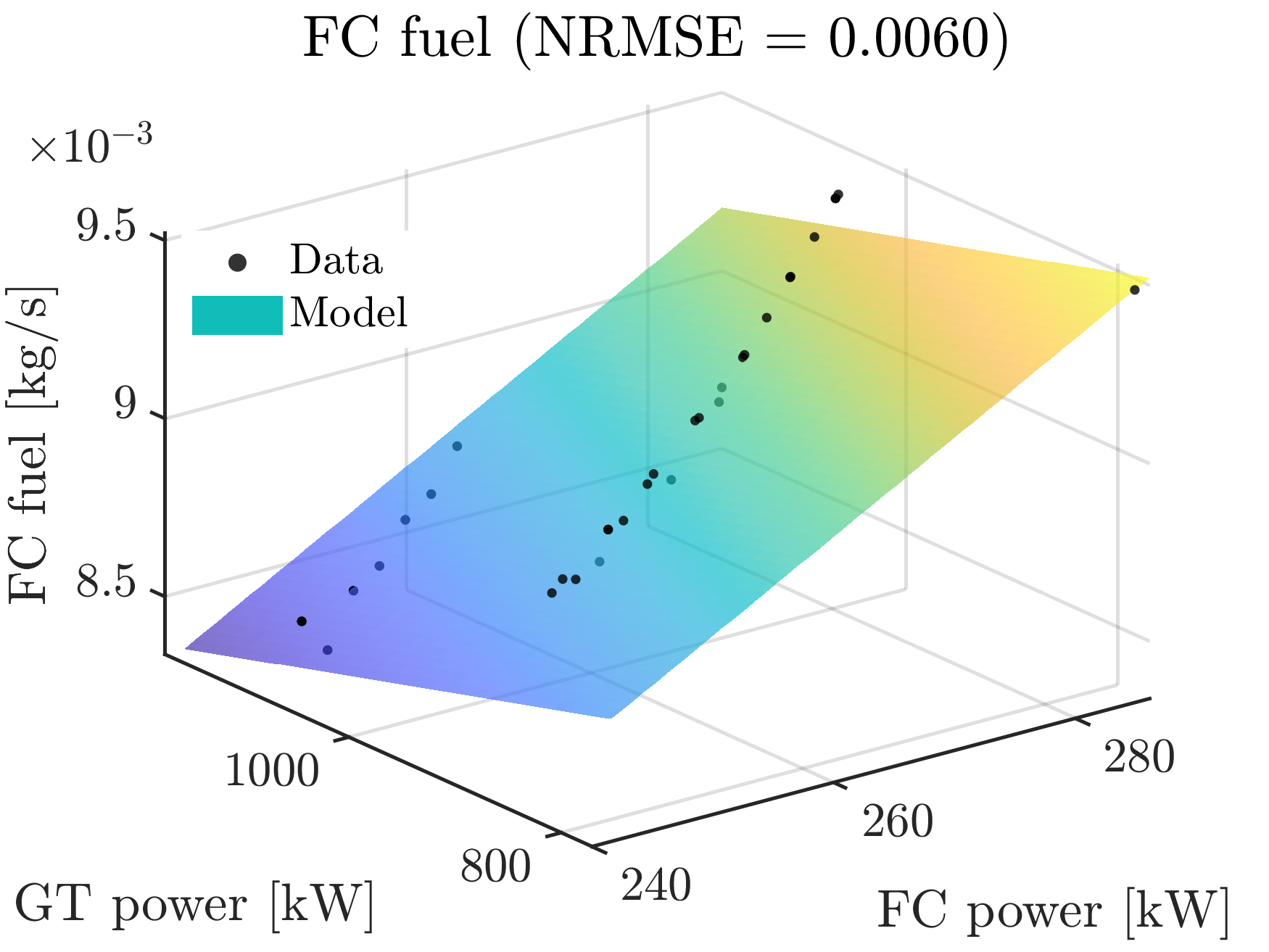}
		\caption{Take off}
		\label{fig:ps_to}
	\end{subfigure}
	\hfill
	\begin{subfigure}[t]{0.48\textwidth}
		\centering
		\includegraphics[width=\textwidth]{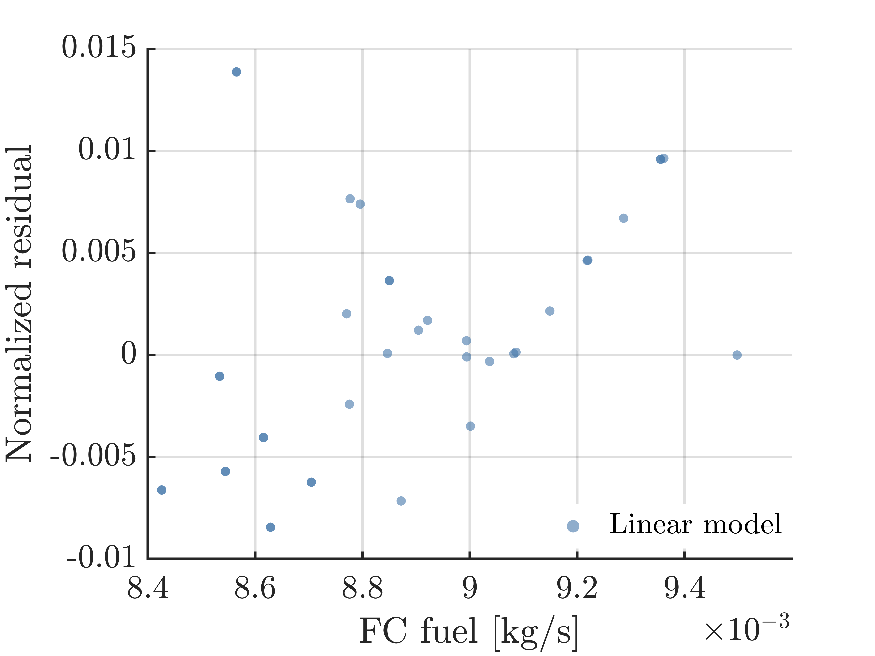}
		\caption{Take off modeling residual error}
		\label{fig:mfctoer}
	\end{subfigure}
	\hfill
	\begin{subfigure}[t]{0.48\textwidth}
		\centering
		\includegraphics[width=\textwidth]{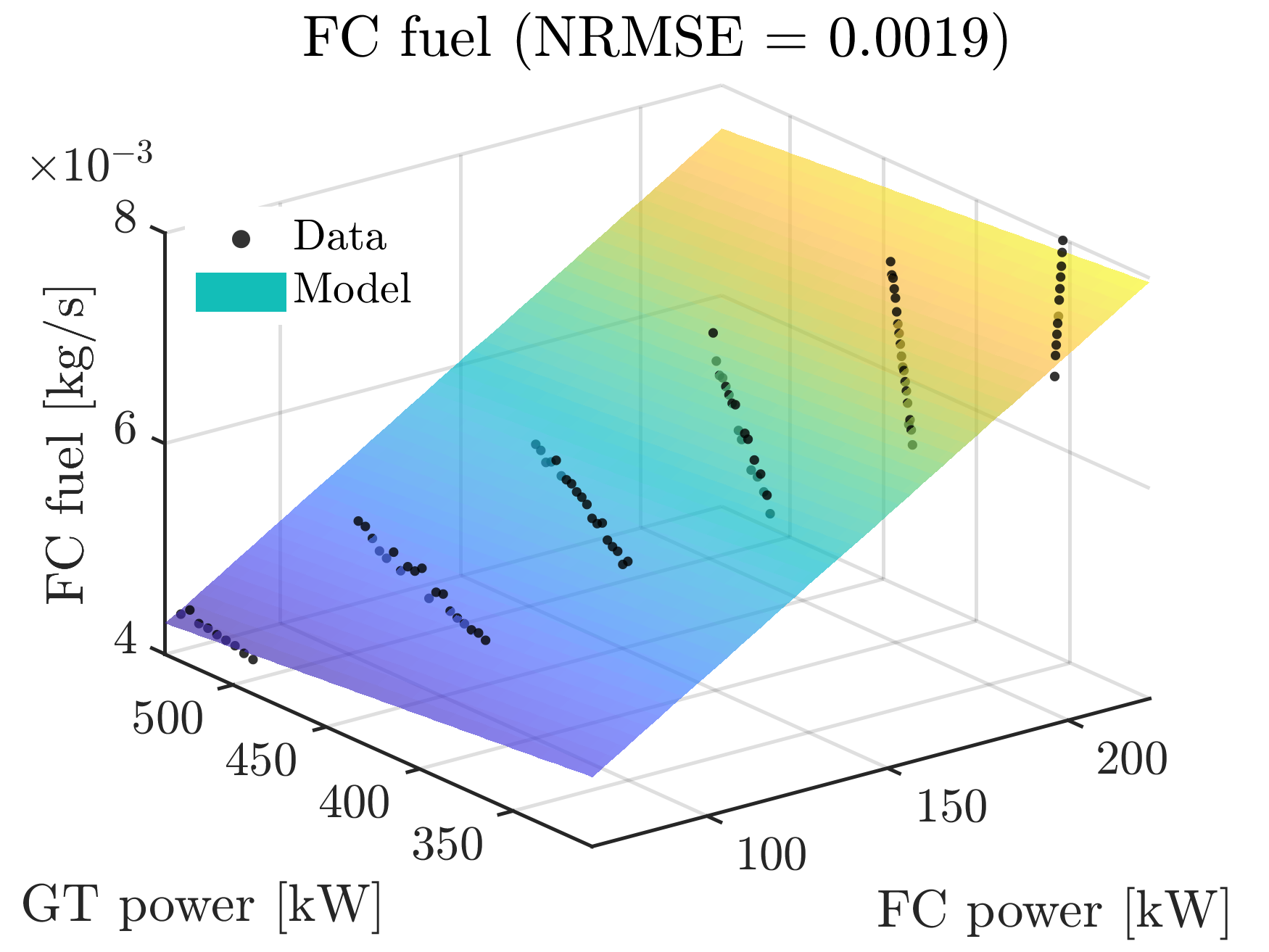}
		\caption{Top of climb}
		\label{fig:ps_toc}
	\end{subfigure}
	\hfill
	\begin{subfigure}[t]{0.48\textwidth}
		\centering
		\includegraphics[width=\textwidth]{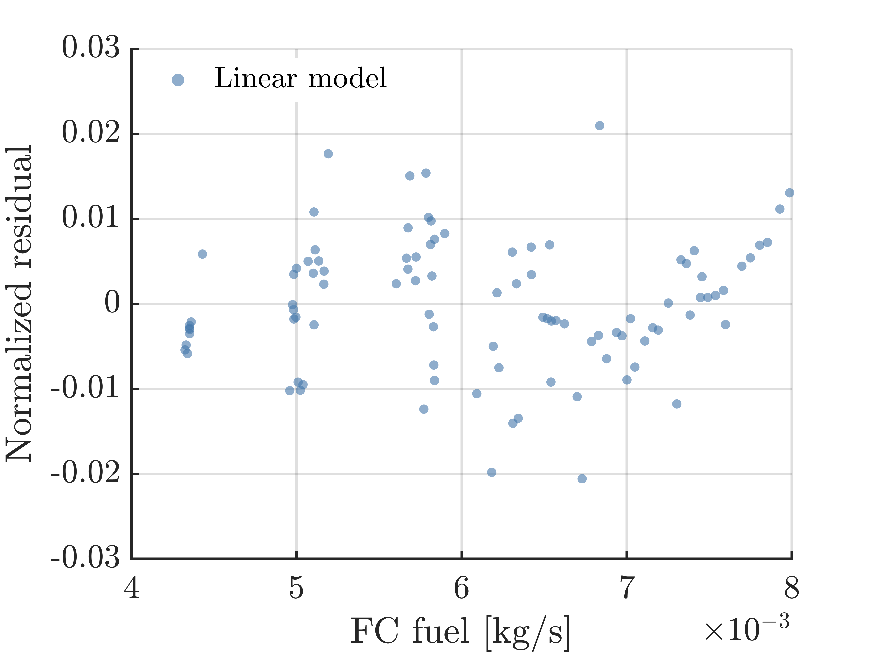}
		\caption{Top of climb modeling residual error}
		\label{fig:mfctocer}
	\end{subfigure}
	\hfill
	\begin{subfigure}[t]{0.48\textwidth}
		\centering
		\includegraphics[width=\textwidth]{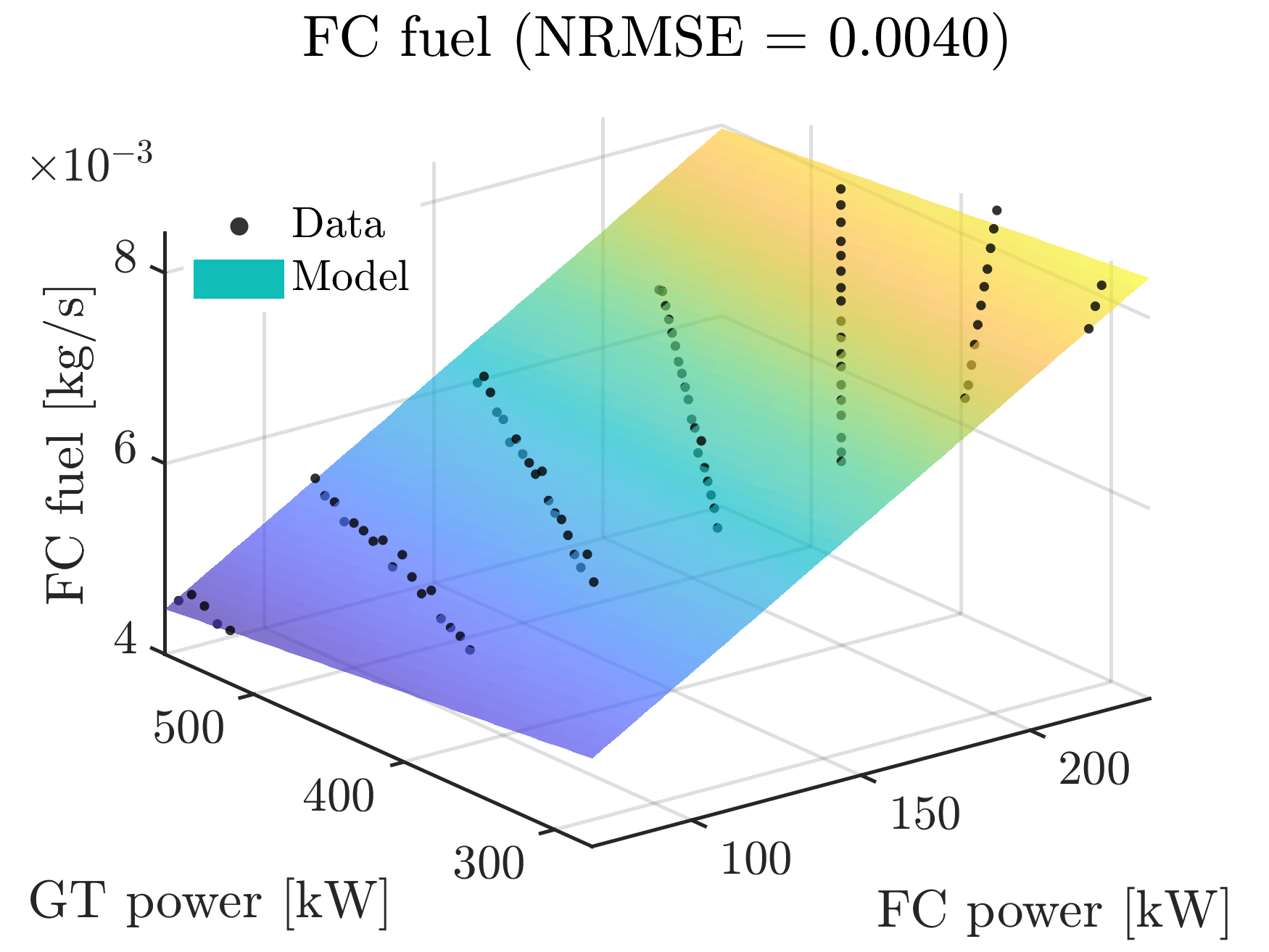}
		\caption{Cruise}
		\label{fig:ps_cr}
	\end{subfigure}
	\hfill
	\begin{subfigure}[t]{0.48\textwidth}
		\centering
		\includegraphics[width=\textwidth]{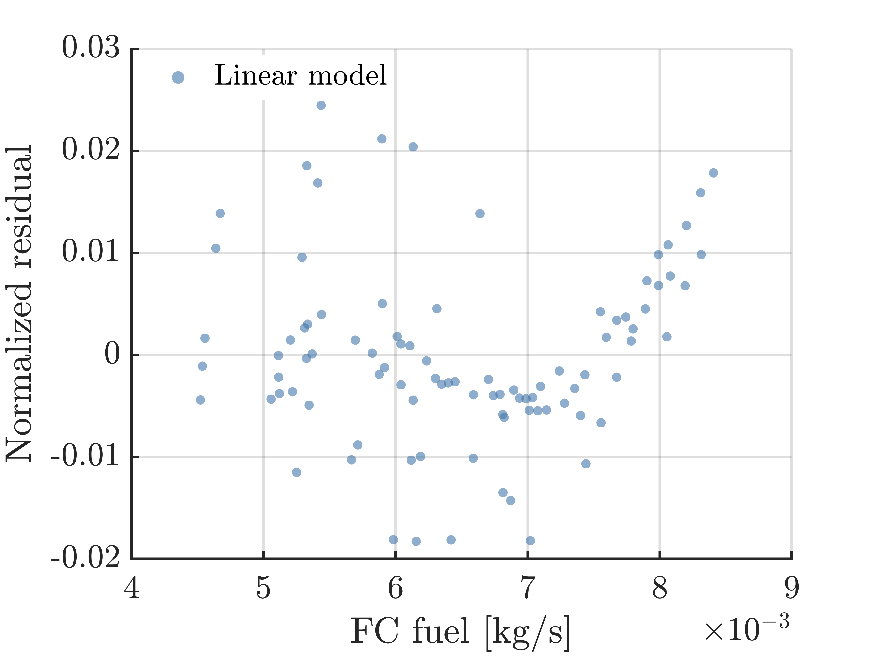}
		\caption{Cruise modeling residual error}
		\label{fig:mfccrer}
	\end{subfigure}
	
	\caption{SOFC fuel flow rate maps  as function of GT power and FC power  and their residual error scatter plots across flight phases.}
	\label{fig:fuelFC}
\end{figure*}
		%
\begin{figure*}[H]
	\centering
	\begin{subfigure}[t]{0.48\textwidth}
		\centering
		\includegraphics[width=\textwidth]{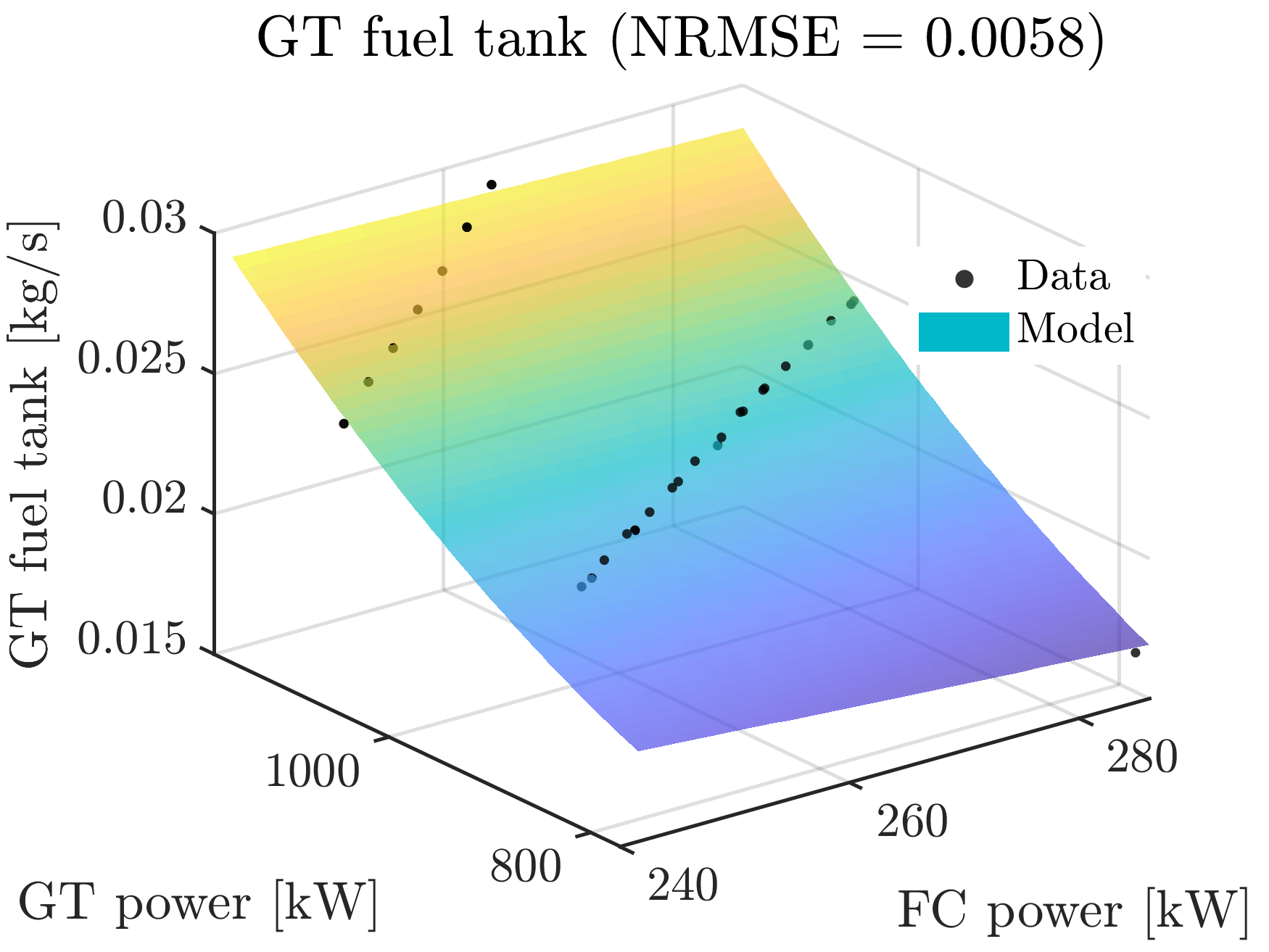}
		\caption{Take off}
		\label{fig:ps_to}
	\end{subfigure}
	\hfill
	\begin{subfigure}[t]{0.48\textwidth}
		\centering
		\includegraphics[width=\textwidth]{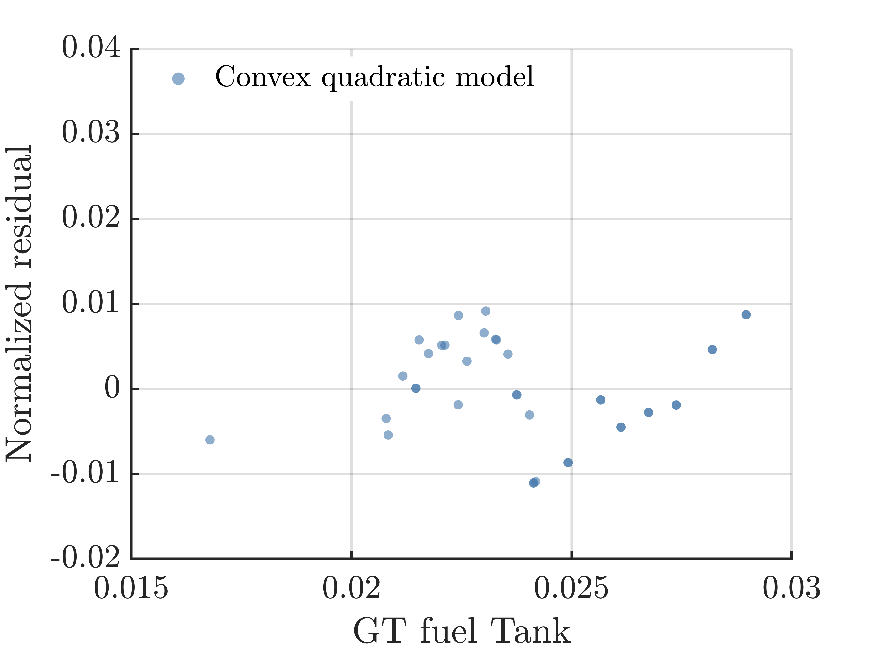}
		\caption{Take off modeling residual error}
		\label{fig:mgttoer}
	\end{subfigure}
	\hfill
	\begin{subfigure}[t]{0.48\textwidth}
		\centering
		\includegraphics[width=\textwidth]{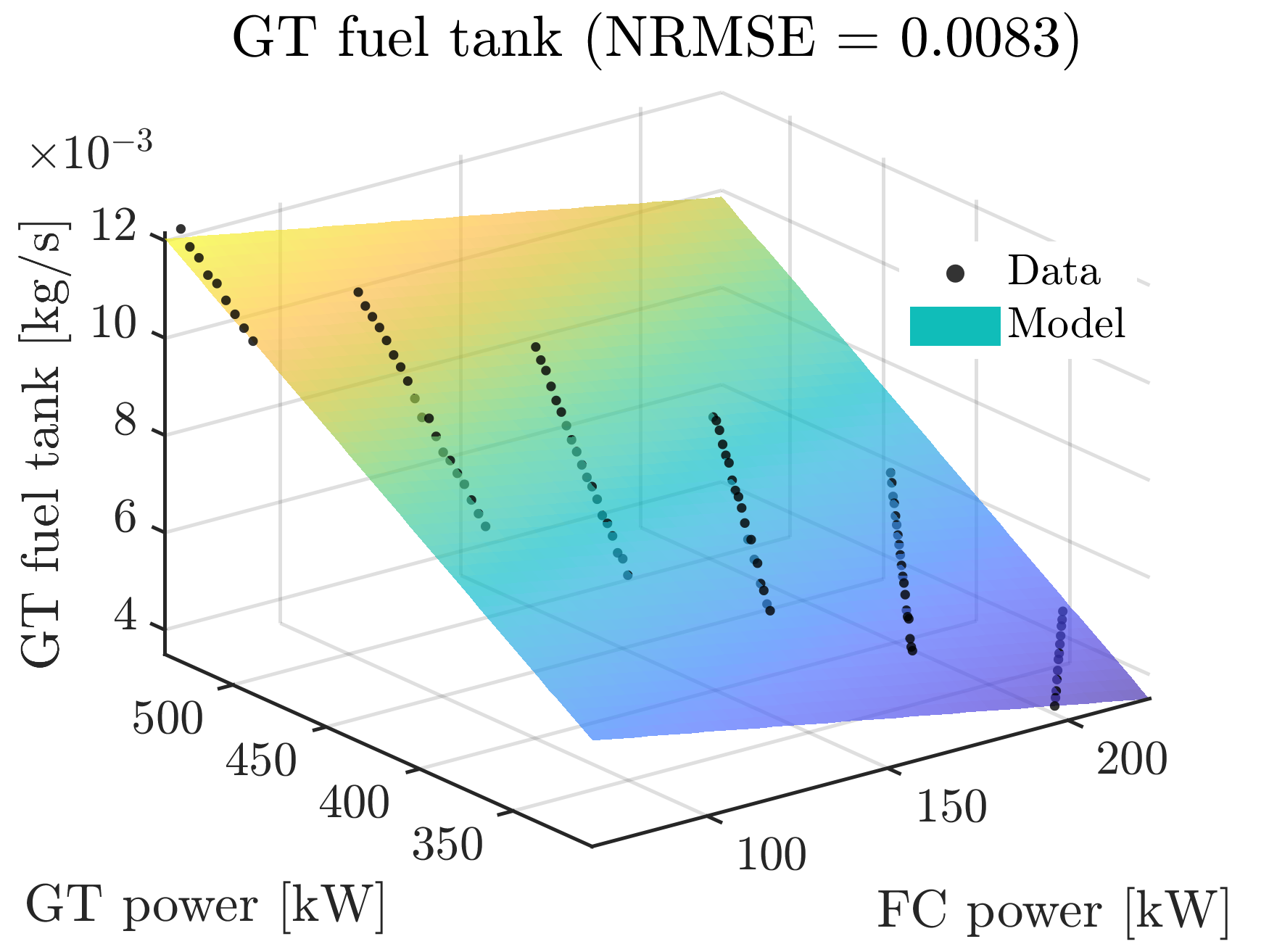}
		\caption{Top of climb}
		\label{fig:ps_toc}
	\end{subfigure}
	\hfill
	\begin{subfigure}[t]{0.48\textwidth}
		\centering
		\includegraphics[width=\textwidth]{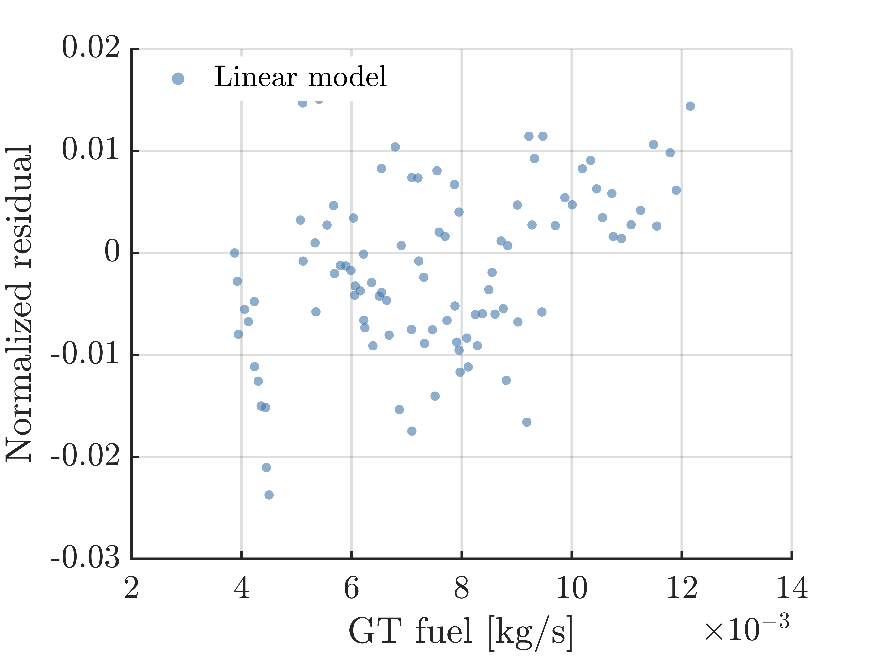}
		\caption{Top of climb modeling residual error}
		\label{fig:mgttocer}%
	\end{subfigure}
	\hfill
	\begin{subfigure}[t]{0.48\textwidth}
		\centering
		\includegraphics[width=\textwidth]{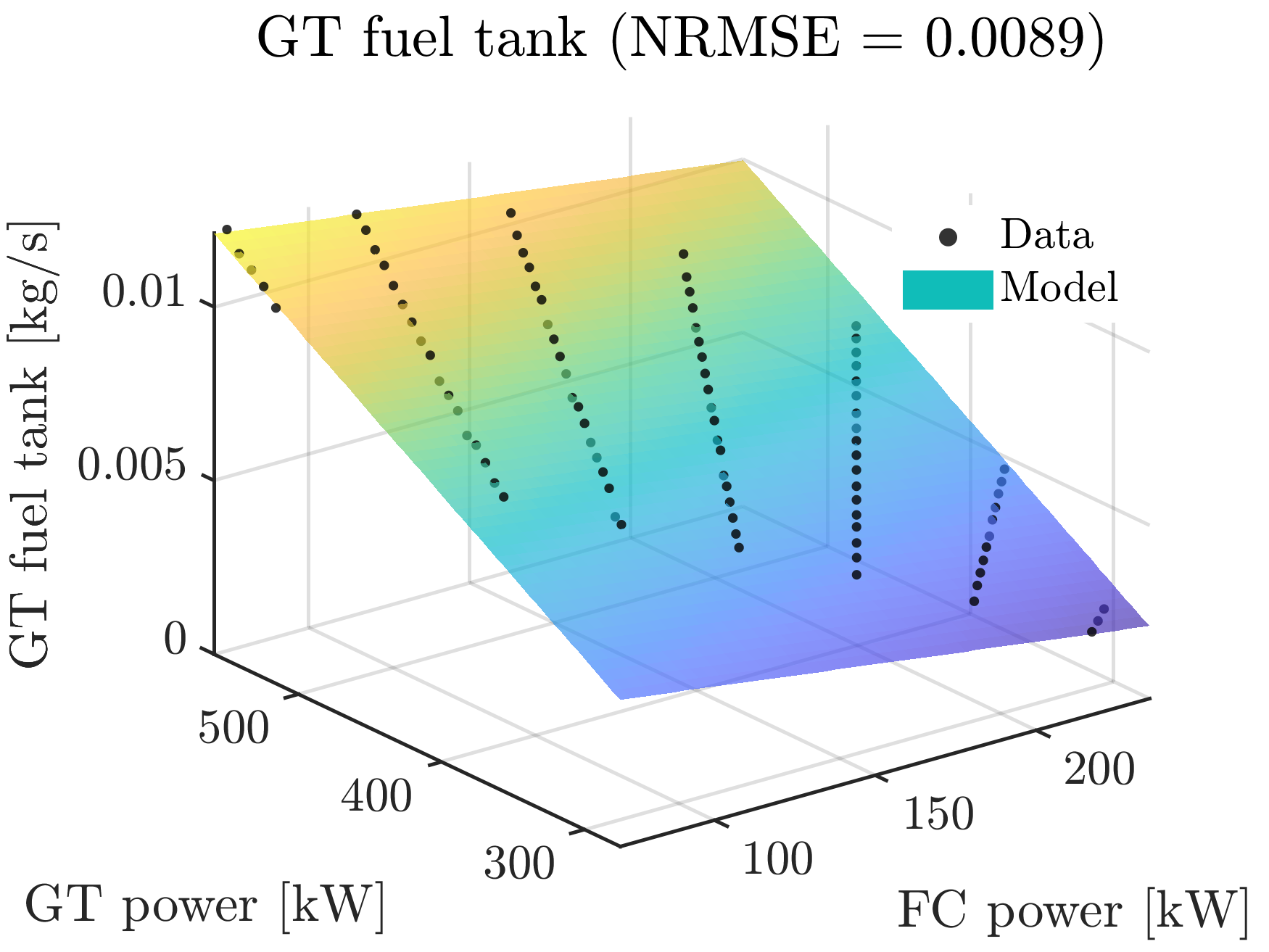}
		\caption{Cruise}
		\label{fig:ps_cr}
	\end{subfigure}
	\hfill
	\begin{subfigure}[t]{0.48\textwidth}
		\centering
		\includegraphics[width=\textwidth]{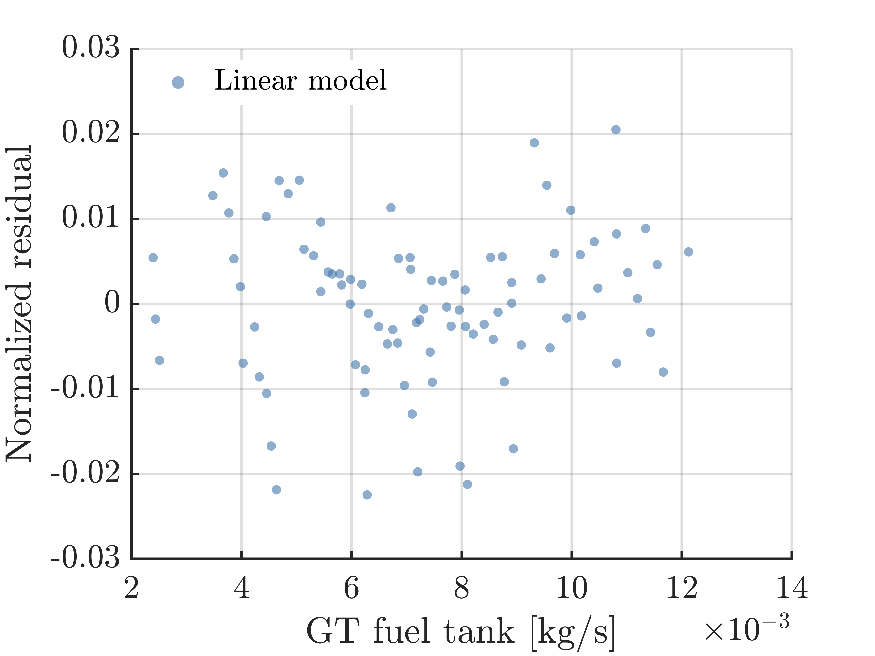}
		\caption{Cruise modeling residual error}
		\label{fig:mgtcrer}
	\end{subfigure}
	
	\caption{Gas turbine fuel flow rate maps as function of GT power and FC power and their residual error scatter plots across flight phases.}
	\label{fig:fuelGT}
\end{figure*} 
Additionally,~\crefrange{fig:inletT}{fig:fuelGT} present three dimensional surface maps illustrating how subsystem power allocation influences key outputs of the propulsion system across different flight phases. The horizontal axes present the values for gas turbine power and fuel cell power, while the vertical axis shows the resulting outputs. On the right side of the maps, the residual error of the models are shown for each variable.

We show the SOFC inlet temperature map in~\cref{fig:inletT}. The SOFC inlet temperature varies approximately between $840 \ K$ to $973.15 \ K$, and increases as both GT power and FC power rise. Although the  surface shows a noticeable upward slope with fuel cell power in all three phases, showing higher fuel cell power leads to hotter SOFC inlet condition, the temperature gradient is steepest along the GT power during take off comparing to top of climb and the cruise phases. This suggests that gas turbine performance has a slight influence on the inlet temperature of SOFC in take off.

Figure~\ref{fig:outletT} depicts the surfaces that quantify the SOFC outlet temperature changes as power is distributed between the GT and SOFC. At take off, higher total power demand causes outlet temperature to rise sharply with increasing the FC power, and slightly decreasing with rising GT power, reflecting the thermal behavior of the SOFC and the cooling impact of the air from GT. During top of climb, the temperature surface shows smoother and less steep changes, corresponding to reduced power demand.

Furthermore,~\cref{fig:fuelFC} shows the dependency of fuel cell fuel mass flow rate on the power allocation. Across all three flight phases the fuel consumption increases linearly with fuel cell power. During take off, the SOFC fuel flow rate increases significantly with increasing fuel cell power and decreases gradually with rising gas turbine power, while in top of climb and in cruise the gas turbine power does not affect the fuel flow required for the fuel cell.

Finally,~\cref{fig:fuelGT} depicts the surface map for GT fuel flow rate changing with the power distribution. The maps for cruise and top of climb shows that the fuel flow rate to gas turbine from tank rises linearly by increasing GT power and slightly decreasing by rising the fuel cell power; however, it has a convex behavior in take off, showing almost no dependency on the fuel cell power but a rapid quadratic change with rising the gas turbine power. This behavior reflects the nonlinear efficiency of the gas turbine under high load condition. 

To validate the models,~\cref{fig:tintoer,fig:tintocer,fig:tincrer,fig:toresto,fig:torestoc,fig:torescr,fig:mfctoer,fig:mfctocer,fig:mfccrer,fig:mgttoer,fig:mgttocer,fig:mgtcrer} 
 present the scatter plots of the residual errors of the surrogate models for key lifting variables across take off, top of climb and cruise conditions 
as
\begin{equation}
	r_{\mathrm{e}}=\frac{(x-\hat{x})}{x_{\mathrm{max}}},
\end{equation}
where $\hat{x}$ is the estimated value of the variable from the approximation models and $x_{\mathrm{max}}$ is the maximum value from the dataset for each variables.
The residual analysis shows that across all variables and flight phases, the residuals are centered around zero and exhibits no discernible bias, indicating that the surrogate models fo not systematically over or under predict the behavior of the system. Moreover, the absence of visible patterns in~\cref{fig:tintoer,fig:tintocer,fig:tincrer,fig:mfctoer,fig:mfctocer,fig:mfccrer,fig:mgttoer,fig:mgttocer,fig:mgtcrer} 
suggests that the order of the system and nonlinearities are sufficiently captured. However, the presence of a low order pattern in~\cref{fig:toresto,fig:torestoc,fig:torescr} 
suggests that higher order thermal model is relevant at the studied power ranges. We neglect the nonlinearity effect as the residual magnitude remains small and does not affect power management results. 
These surface maps demonstrate that the surrogate models not only achieve sufficient accuracy but also capture relevant physical trends required for the optimal controller.
\else
Table~\ref{tab:NRME} summarizes the NRMSE values for all lifting variables across take off, top of climb ans cruise.
All linear and semi definite programs used for surrogate model identification were solved using \textsc{MOSEK}~\citep{mosek}. 
\fi
\ifjournal
\else
\begin{table}[h!]
	\centering
	\caption{NRMSE for Lifting Variables}
	\label{tab:NRME}
	\begin{tabular}{l|ccc}
		\toprule
		\textbf{Variable} & \textbf{Take off} & \textbf{Top of climb} & \textbf{Cruise}\\
		\midrule
		$\dot{m}_{\mathrm{f,gt}}$ & $0.6\%$ & $0.8\%$ & $0.9\%$\\
		$\dot{m}_{\mathrm{f,fc}}$ & $0.6\%$ & $0.2\%$ & $0.4\%$\\
		$\dot{m}_{\mathrm{b}}$    & $0.6\%$ & $1.1\%$ & $1.1\%$\\
		$T_{\mathrm{in}}$         & $0.06\%$ & $0.2\%$ & $0.2\%$\\
		$T_{\mathrm{hpc}}$        & $0.3\%$ & $0.3\%$ & $0.2\%$\\
		$T_{\mathrm{et}}$         & $0.8\%$ & $0.4\%$ & $0.3\%$\\
		$T_{\mathrm{out}}$	& $0.04\%$ & $0.5\%$ & $0.4\%$ \\
		\bottomrule
	\end{tabular}
\end{table}
\fi
\ifjournal

\else
Additionally,\ifjournal~\cref{fig:inletT, fig:outletT, fig:fuelFC, fig:fuelGT} \else~\cref{fig:inletT,fig:fuelFC,fig:fuelGT} \fi present three dimensional surface maps illustrating how subsystem power allocation influences key outputs of the propulsion system across different flight phases. The horizontal axes present the values for gas turbine power and fuel cell power, while the vertical axis shows the resulting outputs. These figures exhibit important information about how each variable increases or decreases in response to the change in power distribution between GT and SOFC.
\fi

\ifjournal
\else
	
To validate the models,~\cref{fig:Tinres,fig:mfcres,fig:mfgt_res} present the scatter plots of the normalized residual errors of the surrogate models for key lifting variables across take off, top of climb and cruise conditions
 as
\begin{equation}
	r_{\mathrm{e}}=\frac{(x-\hat{x})}{x_{\mathrm{max}}},
\end{equation}
where $\hat{x}$ is the estimated value of the variable from the approximation models and $x_{\mathrm{max}}$ is the maximum value from the dataset for each variables.
 The residual analysis shows that across all variables and flight phases, the residuals are centered around zero and exhibits no discernible bias, indicating that the surrogate models fo not systematically over or under predict the behavior of the system. Moreover, the absence of visible patterns in~\cref{fig:Tinres,fig:mfcres,fig:mfgt_res} suggests that the order of the system and nonlinearities are sufficiently captured.
\fi

\subsection{Problem Formulation}
The objective is to minimize the fuel consumption over a given time horizon $t\in[0, T]$ during each flight phase,
\begin{equation} \label{eq:obj}
    \min{{\int_{0}^{T}\dot{m}_{\mathrm{f}} \ \mathrm{d}t}},
\end{equation}
where $\dot{m}_{\mathrm{f}} = \dot{m}_{\mathrm{f,gt}} + \dot{m}_{\mathrm{f,fc}}$ denotes the combined fuel flow rate to the gas turbine and the fuel cell. The required propulsion power is enforced through the power balance in the integrated propulsion system,
\begin{equation}\label{eq:req}
P_{\mathrm{req}} = P_{\mathrm{gt}} + \eta \cdot \left(P_{\mathrm{fc}}-P_{\mathrm{aux}}\right).
\end{equation}
Here, the $P_{\mathrm{req}}$, $P_{\mathrm{aux}}$ and $\eta$ are the required power, auxiliary power, and electrical motor efficiency, respectively. These quantities are assumed to be known. 
Based on the quasi-static models in \ifjournal~\crefrange{eq:modelTO}{eq:ThpcTOC} and~\cref{eq:req}\else~\crefrange{eq:affine}{eq:req}\fi, together with the safety and operational constraints of the propulsion system, the optimal control problem is framed in Problem~\ref{prob:main}.
\begin{prob}[Optimal power management problem]\label{prob:main}
Given a power request, $P_{\mathrm{req}}$, the optimal power to be provided by GT and SOFC results from the solution of:
\end{prob}
\begin{equation} \label{eq:optp}
    \begin{aligned}
         \min_{P_{\mathrm{fc}}}\quad
         & {{\int_{0}^{T}\dot{m}_{\mathrm{f}} \ dt}}& \\
         \textnormal{s.t. } &\ifjournal \textnormal{\crefrange{eq:modelTO}{eq:ThpcTOC} \ and  \cref{eq:req}}
          \else \textnormal{\cref{eq:affine,eq:quadratic,eq:req}}, \fi & \\
          &\begin{aligned}
          P_{\mathrm{gt}} &\in [P_{\mathrm{gt,{\mathrm{min}}}},P_{\mathrm{gt,{\mathrm{max}}}}], \\
          P_{\mathrm{fc}} &\in [P_{\mathrm{fc,{\mathrm{min}}}},P_{\mathrm{fc,{\mathrm{max}}}}], \\
          P_{\mathrm{em}} &\in [P_{\mathrm{em,{\mathrm{min}}}},P_{\mathrm{em,{\mathrm{max}}}}], \\
	T_{\mathrm{in}} &\in [T_{\mathrm{in,{\mathrm{min}}}},T_{\mathrm{in,{\mathrm{max}}}}], \\
	T_{\mathrm{hpc}} &\in [T_{\mathrm{hpc,{\mathrm{min}}}},T_{\mathrm{hpc,{\mathrm{max}}}}], \\
	T_{\mathrm{et}} &\in [T_{\mathrm{et,{\mathrm{min}}}},T_{\mathrm{et,{\mathrm{max}}}}], \\
	T_{\mathrm{out}} &\in [T_{\mathrm{out,{\mathrm{min}}}},T_{\mathrm{out,{\mathrm{max}}}}], \\
	 \dot{m}_{\mathrm{f,gt}} &\in [\dot{m}_{\mathrm{f,gt,{\mathrm{min}}}},\dot{m}_{\mathrm{f,gt,{\mathrm{max}}}}], \\
          \dot{m}_{\mathrm{f,fc}} &\in [\dot{m}_{\mathrm{f,fc,{\mathrm{min}}}},\dot{m}_{\mathrm{f,fc,{\mathrm{max}}}}], \\
	 \dot{m}_{\mathrm{b}} &\in [\dot{m}_{\mathrm{b,{\mathrm{min}}}},\dot{m}_{\mathrm{b,{\mathrm{max}}}}]. 
	 \end{aligned}\\
    \end{aligned}
\end{equation}
The decision variable in this problem is the fuel cell output power $P_{\mathrm{fc}}$, such that the  Problem~\ref{prob:main} reduces to a one dimensional optimization problem. Because the system model is purely algebraic and does not include dynamics, the optimization in Problem~\ref{prob:main} reduces to a pointwise steady state problem. The resulting control action is therefore computed offline or online as a static optimal control input for each operating point. The~\cref{eq:obj} defines the total hydrogen consumption minimized by the optimal controller, while~\cref{eq:req} enforces the power balance between mechanical and electrical subsystems. The remaining constraints in~\cref{eq:optp} represent thermal, mechanical and operational limits. 
\ifjournal
\begin{lemma}[Monotonicity of SOFC inlet temperature] \label{lem:Tinmono}
	Let the SOFC inlet temperature be approximated by the quadratic term~\cref{eq:TinTO}
	. After substituting the power balance constraint~\cref{eq:req}, the mapping $T_{\mathrm{in}}(P_{\mathrm{fc}})$ is a concave quadratic function of $P_{\mathrm{fc}}$ with a unique maximizer at 
	\begin{equation}
		P_{\mathrm{fc}}^{\mathrm{peak}} = - \frac{b}{2a},
	\end{equation}
	where $a < 0$.
	If the admissible operating set satisfies 
	\begin{equation}
		P_{\mathrm{fc}} \le P_{\mathrm{fc}}^{\mathrm{peak}},
	\end{equation}
	then $T_{\mathrm{in}}(P_{\mathrm{fc}})$ is strictly increasing over the feasible set.
\end{lemma}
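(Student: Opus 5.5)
The plan is to reduce $T_{\mathrm{in}}$ to a scalar quadratic in $P_{\mathrm{fc}}$ by eliminating $P_{\mathrm{gt}}$ through the power balance~\cref{eq:req}, and then to use elementary properties of concave parabolas. From~\cref{eq:req} we have $P_{\mathrm{gt}} = c_0 + c_1 P_{\mathrm{fc}}$ with $c_0 = P_{\mathrm{req}} + \eta P_{\mathrm{aux}}$ and $c_1 = -\eta$. This gives $\mathbf{p} = \mathbf{u} + \mathbf{v}P_{\mathrm{fc}}$ with $\mathbf{u} = [c_0\ \ 0]^\top$ and $\mathbf{v} = [-\eta\ \ 1]^\top$. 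Substituting into~\cref{eq:TinTO} yields $T_{\mathrm{in}}(P_{\mathrm{fc}}) = aP_{\mathrm{fc}}^2 + bP_{\mathrm{fc}} + c$, where
\begin{equation*}
a = \mathbf{v}^\top Q \mathbf{v}, \qquad b = 2\mathbf{u}^\top Q\mathbf{v} + \mathbf{q}^\top \mathbf{v}, \qquad c = \mathbf{u}^\top Q\mathbf{u} + \mathbf{q}^\top\mathbf{u} + q_0 .
\end{equation*}
The coefficients $b$ and $c$ depend on the given $P_{\mathrm{req}}$ and $P_{\mathrm{aux}}$, which are fixed in Problem~\ref{prob:main}.

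Next we establish concavity and the location of the maximizer. Since the regression~\cref{eq:qp} for $T_{\mathrm{in}}$ enforces $Q \preceq 0$, we get $a = \mathbf{v}^\top Q\mathbf{v} \le 0$, so the restriction of a concave function to an affine line is concave. This step is the main obstacle: the statement requires $a<0$ strictly, but negative semidefiniteness alone allows $a=0$ when $\mathbf{v}$ lies in the kernel of $Q$. I would handle this by taking $a<0$ as the standing hypothesis of the lemma, consistent with the clause ``where $a<0$''. This can also be checked numerically from the identified $Q$ for each flight phase, and $Q\prec 0$ would suffice because $\mathbf{v}\neq 0$. With $a<0$, the derivative is $T_{\mathrm{in}}'(P_{\mathrm{fc}}) = 2aP_{\mathrm{fc}} + b$. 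It vanishes only at $P_{\mathrm{fc}}^{\mathrm{peak}} = -b/(2a)$, and since $T_{\mathrm{in}}'' = 2a < 0$, this point is the unique global maximizer.

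Finally, for monotonicity we write $T_{\mathrm{in}}'(P_{\mathrm{fc}}) = 2a\,(P_{\mathrm{fc}} - P_{\mathrm{fc}}^{\mathrm{peak}})$. This is strictly positive for $P_{\mathrm{fc}} < P_{\mathrm{fc}}^{\mathrm{peak}}$ because $a<0$. So on the feasible set, an interval contained in $(-\infty, P_{\mathrm{fc}}^{\mathrm{peak}}]$ by assumption~(iii) of the problem statement, the derivative is positive except possibly at the right endpoint. For any $x<y$ in this set, the mean value theorem gives $T_{\mathrm{in}}(y) - T_{\mathrm{in}}(x) = T_{\mathrm{in}}'(\xi)(y-x)$ with $\xi\in(x,y)$, so $\xi<P_{\mathrm{fc}}^{\mathrm{peak}}$ and the difference is positive. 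A single zero of the derivative at the endpoint therefore does not break strict monotonicity, which concludes the argument.
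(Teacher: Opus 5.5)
Your proposal is correct and follows the same route as the paper: you substitute $P_{\mathrm{gt}} = (P_{\mathrm{req}}+\eta P_{\mathrm{aux}}) - \eta P_{\mathrm{fc}}$ into the $T_{\mathrm{in}}$ surrogate to get a scalar quadratic with $a = Q_{22}-2Q_{12}\eta+Q_{11}\eta^2$, then read the peak and the monotonicity off the derivative $2aP_{\mathrm{fc}}+b$. Your two refinements fix real gaps in the paper's proof: it claims $a<0$ follows from $Q \preceq 0$, which only gives $a \le 0$, so strictness has to be assumed or obtained from $Q \prec 0$ as you note; and it never addresses the endpoint $P_{\mathrm{fc}} = P_{\mathrm{fc}}^{\mathrm{peak}}$, which your mean value argument covers.
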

\begin{proof}
	Substituting the power balance into~\cref{eq:TinTO} 
	 yields
	\begin{equation}
		T_{\mathrm{in}}(P_{\mathrm{fc}}) = a P_{\mathrm{fc}}^2 + b P_{\mathrm{fc}} + c,
	\end{equation}	  
	where $a$, $b$ and $c$ are defined in~\crefrange{eq:aT}{eq:cT}. The coefficient $a < 0$ follows from $Q \preceq 0$. The derivative is
	\begin{equation}
		\frac{\mathrm{d} T_{\mathrm{in}}}{\mathrm{d} P_{\mathrm{fc}}} = 2a P_{\mathrm{fc}} + b.
	\end{equation}
	This derivative is positive for all $P_{\mathrm{fc}} < -\frac{b}{2a}$ and vanishes uniquely at
	\begin{equation}
		P_{\mathrm{fc}}^{\mathrm{peak}} = -\frac{b}{2a}.
	\end{equation}
	Therefore, $T_{\mathrm{in}} (P_{\mathrm{fc}})$ is strictly increasing over the admissible region. 
\end{proof}
By the~\cref{lem:Tinmono} the SOFC inlet temperature constraint $T_{\mathrm{in}} \leq T_{\mathrm{in,max}}$ can be equivalently expressed as an upper bound on $P_{\mathrm{fc}}$. Consequently, although the quadratic constraint yields two analytical roots, only the smaller root lies within the admissible region.
This consideration with the constraints in the Problem~\ref{prob:main}, not only ensure the performance safety of the system but also restrict the problem to the admissible range so that all admissible operating points remain in the reliable region.

\else
Although the SOFC inlet temperature surrogate is concave, its dependence on the fuel-cell power is monotonic over the admissible operating region. Consequently, the constraint $T_{\mathrm{in}} \leq T_{\mathrm{in,max}}$ is equivalent to an upper bound on $P_{\mathrm{fc}}$. This reformulation preserves convexity of the feasible set and allows the resulting optimization problem to be solved using convex optimization methods and guarantees optimality.

\fi
\section{Methodology} \label{sec:met}
To derive a closed-form supervisory power management policy, we formulate the Lagrangian of Problem~\ref{prob:main} and apply the KKT optimality conditions~\citep{rao2019}. For convex optimization problems, these conditions are both necessary and sufficient for optimality and consists of stationarity, primal feasibility, dual feasibility, and complementary slackness. Since no system dynamics are included, the resulting policy does not provide feedback adaptation but represents a computationally efficient feedforward control input evaluated independently at each operating point. 

Based on the Problem~\ref{prob:main}, the system model \ifjournal in~\crefrange{eq:modelTO}{eq:ThpcTOC} \else using~\crefrange{eq:affine}{eq:quadratic} \fi are expressed using affine, convex and monotonic concave functions together with the constraints on power, temperatures, and mass flow rate variables. Considering the monotonicity of the non-convex quadratic constraints, the resulting problems admits a unique solution in the feasible set. This structure results in explicit expressions for the optimal fuel cell power at each operating point without iterative numerical solvers.
\ifjournal
\ifjournal
\begin{theorem}[Structure of the optimal policy]\label{theorem:the1}
	For all admissible steady state operating points studied in this paper, the power management policy saturates the SOFC inlet temperature constraint whenever feasible. The corresponding optimal control policy is given by 
\begin{equation}
	P_{\mathrm{fc, const}}^* = \frac{-b + \sqrt{b^2-4ac}}{2a},
\end{equation}
where
\begin{align}
	a &= Q_{22} - 2Q_{12}\eta+Q_{11}\eta^2, \\
	b &= 2(Q_{12}-\eta Q_{11})(P_{\mathrm{req}} + \eta P_{\mathrm{aux}}) -q_{1}\eta+q_{2},\\
	c &= Q_{11}(P_{\mathrm{req}}+\eta P_{\mathrm{aux}})^2 + q_{1}(P_{\mathrm{req}}+\eta P_{\mathrm{aux}}) + \notag\\
	&\quad q_0 - T_{\mathrm{in,max}},
\end{align}
where $Q$, $\mathbf{q}$ and $q_0$ are the mapping coefficients of the $T_{\mathrm{in}}$ model.
\end{theorem}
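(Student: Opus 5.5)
Write $R \coloneqq P_{\mathrm{req}}+\eta P_{\mathrm{aux}}$. I would first use the power balance \cref{eq:req} to eliminate $P_{\mathrm{gt}}$, writing $P_{\mathrm{gt}} = R-\eta P_{\mathrm{fc}}$. This makes every surrogate in \crefrange{eq:modelTO}{eq:ThpcTOC} a scalar function of the single decision variable $P_{\mathrm{fc}}$, and Problem~\ref{prob:main} becomes a pointwise one-dimensional problem.

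Substituting into \cref{eq:TinTO} and expanding $Q_{11}(R-\eta P_{\mathrm{fc}})^2+2Q_{12}(R-\eta P_{\mathrm{fc}})P_{\mathrm{fc}}+Q_{22}P_{\mathrm{fc}}^2+q_1(R-\eta P_{\mathrm{fc}})+q_2P_{\mathrm{fc}}+q_0$ gives $T_{\mathrm{in}}-T_{\mathrm{in,max}} = aP_{\mathrm{fc}}^2+bP_{\mathrm{fc}}+c$. Collecting powers of $P_{\mathrm{fc}}$ yields exactly the stated $a$, $b$, $c$; this step is routine algebra.

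Since $a=\begin{bmatrix}-\eta & 1\end{bmatrix}Q\begin{bmatrix}-\eta & 1\end{bmatrix}^{\top}\le 0$ by $Q\preceq 0$, I would invoke Lemma~\ref{lem:Tinmono} (with $a<0$ strictly). It gives that $T_{\mathrm{in}}\le T_{\mathrm{in,max}}$ is equivalent, on the admissible set $P_{\mathrm{fc}}\le P_{\mathrm{fc}}^{\mathrm{peak}}$, to $P_{\mathrm{fc}}\le \bar P$. Here $\bar P$ is the root of $aP^2+bP+c=0$ lying left of the vertex $-b/(2a)$. Because $a<0$, the root $(-b+\sqrt{b^2-4ac})/(2a)$ equals $-b/(2a)$ plus a nonpositive term. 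It is therefore the smaller root, and it is the one in the admissible region. The same substitution turns every other constraint into an interval bound on $P_{\mathrm{fc}}$: the affine constraints directly, and the convex quadratic $\dot m_{\mathrm{b}}$ (and $\dot m_{\mathrm{f,gt}}$ at take off) through the convexity of their sublevel sets. The feasible set is thus an interval $[\underline P,\overline P]$ with $\overline P=\min\{\bar P,\dots\}$.

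Next I would show that the objective $\dot m_{\mathrm f}(P_{\mathrm{fc}})=\dot m_{\mathrm{f,gt}}+\dot m_{\mathrm{f,fc}}$ is strictly decreasing on this interval. For top of climb and cruise both terms are affine, so the slope is the constant $-\eta e_1+e_2-\eta b_1+b_2$, and I would verify it is negative from the identified coefficients. For take off the objective is a convex quadratic in $P_{\mathrm{fc}}$, so its derivative is nondecreasing. It therefore suffices to check that the derivative is negative at the right endpoint $\bar P$ over the admissible range of $P_{\mathrm{req}}$.

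With a strictly decreasing objective on an interval, the KKT conditions close the argument. Stationarity $\dot m_{\mathrm f}'(P^*)+\mu_{\max}\,\mathrm{d}T_{\mathrm{in}}/\mathrm{d}P_{\mathrm{fc}}-\mu_{\min}=0$ together with $\dot m_{\mathrm f}'<0$ and $\mathrm{d}T_{\mathrm{in}}/\mathrm{d}P_{\mathrm{fc}}>0$ (Lemma~\ref{lem:Tinmono}) forces $\mu_{\max}>0$. Complementary slackness then makes the upper constraint active. The problem is convex in the reformulated variable, so KKT is sufficient and the minimizer $P^*_{\mathrm{fc,const}}=\bar P$ is global and unique. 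The phrase ``whenever feasible'' covers the case where $\bar P$ is the binding upper bound, i.e.\ the other constraints are inactive there.

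The main obstacle is the sign of $\dot m_{\mathrm f}'$. It is not a structural consequence of the curvature constraints but depends on the identified coefficients. I would settle it by checking the affine slopes and the take-off endpoint derivative numerically over the studied operating points, and by verifying that the $T_{\mathrm{in}}$ root is indeed the smallest upper bound among all constraints there.
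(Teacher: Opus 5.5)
Your proposal is correct and follows the paper's route. You eliminate $P_{\mathrm{gt}}$ via the power balance, use Lemma~\ref{lem:Tinmono} to select the smaller root of the $T_{\mathrm{in}}$ quadratic, and close with the KKT conditions, relying, as the paper does, on the identified coefficients at the studied operating points to certify that the $T_{\mathrm{in}}$ upper bound is the binding constraint. Your explicit check that $\dot{m}_{\mathrm{f}}'<0$ up to $\bar P$ is the same condition the paper encodes through $P_{\mathrm{fc,unc}}^*=+\infty$ for top of climb and cruise, the rule $\min\{P_{\mathrm{fc,const}},P_{\mathrm{fc,unc}}\}$ for take off, and the sign of $\mu_4^{\mathrm{eff}}=-\dot{m}_{\mathrm{f}}'(P^*)/T_{\mathrm{in}}'(P^*)$.
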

\begin{proof}
	The unconstrained optimum corresponds to the minimum fuel consumption obtained by neglecting system constraints, which equals
\begin{equation}
	P_{\mathrm{fc,unc}}^* = \begin{cases}
		\begin{aligned}
				&\frac{2(Q_{\mathrm{f}11}\eta - Q_{\mathrm{f}12})(P_{\mathrm{req}}+\eta P_{\mathrm{aux}})+}{2(Q_{\mathrm{f}11}\eta^2 -2Q_{\mathrm{f}12}\eta } \\
				&\frac{ (b_1+ q_{\mathrm{f}1})\eta - (q_{\mathrm{f}2}
					+b_2)}{+ Q_{\mathrm{f}22})}
			\end{aligned} 
			& \text{if } \text{take off}, \\
		+\infty & \hspace{-60pt} \text{if } \text{top of climb and cruise}.
	\end{cases}
\end{equation}
However, feasibility is determined with respect to the operational and thermal constraints. 
Therefore, if $P_{\mathrm{fc,unc}}^*$ lies beyond the feasible set, the KKT conditions guarantee that the optimal solution occurs at an active constraint. 
\subsection*{Constrained Solution}
The KKT conditions associated with Problem~\ref{prob:main} are given as
\begin{align}
&\frac{\mathrm{\partial}L}{\mathrm{\partial}P_{\mathrm{fc}}} = 0, \label{eq:station}\\
&\mu_i \cdot g_i = 0, \label{eq:cs} \\
&\mu_i \geq 0, \label{eq:dual}\\  
&g_i \leq 0  \label{eq:primal}, \ i = 1, 2, 3, ..., 20,
\end{align}
where~\ref{eq:station} denotes the stationarity condition,~\ref{eq:cs} is the complementary slackness,~\ref{eq:dual} is the dual feasibility, and~\ref{eq:primal} is the primal feasibility and it is expanded in~\cref{eq:gexp}. 
We define the corresponding Lagrangian as
\begin{equation} \label{eq:L1}
	L = \dot{m}_{\mathrm{f,fc}}+\dot{m}_{\mathrm{f,gt}} + \bm{\mu}^{\top}\cdot\mathbf{g} ,
\end{equation}
where $\bm{\mu} \in $ denotes the vector of Lagrange multipliers associated with the inequality constraints. 
Applying the stationary condition with respect to the decision variable $P_{\mathrm{fc}}$ gives
\begin{equation}\label{eq:pL}
	\frac{\mathrm{\partial}L}{\mathrm{\partial}P_{\mathrm{fc}}} = \frac{\mathrm{\partial}(\dot{m}_{\mathrm{f,fc}}+\dot{m}_{\mathrm{f,gt}})}{\mathrm{\partial}P_{\mathrm{fc}}} + \bm{\mu}^{\top} \frac{\mathrm{\partial}\mathbf{g}
	}{\mathrm{\partial}P_{\mathrm{fc}}},
\end{equation}
which leads to a piecewise analytical expression for the optimal power management policy after implementing~\crefrange{eq:cs}{eq:primal}.
As a result, when only affine constraints are active, the optimal value follows the closed form expression of 
\begin{equation}\label{eq:lin}
	P_{\mathrm{fc, lin}}^* = \frac{x_{j,\mathrm{max/min}}- \theta_{j,0} - \theta_{j,1} (P_{\mathrm{req}}+\eta P_{\mathrm{aux}})}{-\theta_{j,1}\eta +\theta_{j,2}} ,
\end{equation}
and if a quadratic constraint is active then the optimal power for fuel cell is obtained from the scalar quadratic root as
\begin{equation}
	P_{\mathrm{fc,quad}}^* = \frac{-b_j \pm \sqrt{b_j^2-4a_jc_j}}{2a_j},
\end{equation}
with coefficients
\begin{align}
	a_j &= Q_{j,22} - 2Q_{j,12}\eta+Q_{j,11}\eta^2, \\
	b_j &= 2(Q_{j,12}-\eta Q_{j,11})(P_{\mathrm{req}} + \eta P_{\mathrm{aux}})-q_{j,1}\eta+q_{j,2},\\
	c_j &= Q_{j,11}(P_{\mathrm{req}}+\eta P_{\mathrm{aux}})^2 + q_{j,1}(P_{\mathrm{req}}+\eta P_{\mathrm{aux}}) + \notag\\
	&\quad q_{j,0} - x_{j,\mathrm{max/min}}.
\end{align}
Therefore, when a linear constraint is active, the optimal fuel cell power reduces to the affine expression in~\cref{eq:lin}. Similarly, when a quadratic constraint is active the optimal power is obtained by solving a scalar quadratic equation. By implementing this methodology as explained in appendix~\ref{sec:ap}, the optimal power allocation is represented in~\cref{eq:pfcto,eq:pfctoc} and the effective optimal control policy is
		\begin{equation}
			P_{\mathrm{fc,opt}}^{\mathrm{eff}} = \min{\{P_{\mathrm{fc,const}}, P_{\mathrm{fc,unc}}\}}.
		\end{equation}
It is concluded that the only constraint that is active in all three studied operating points is the thermal limit enforced by the SOFC inlet temperature upper bound.
\end{proof}
According to~\cref{theorem:the1}, the constrained optimal power allocation for take off, top of climb and cruise are bounded by the thermal limitation of the SOFC inlet. The results derived by KKT conditions shows a boundary-seeking behavior.
\fi
\else
The unconstrained optimum corresponds to the minimum fuel consumption obtained by neglecting system constraints, which equals to
\begin{equation}
	P_{\mathrm{fc,unc}}^* = \begin{cases}
		\frac{2(Q_{\mathrm{f}11} + Q_{\mathrm{f}12})(P_{\mathrm{req}}+\eta P_{\mathrm{aux}}) +(b_1- q_{\mathrm{f}1})\eta + (q_{\mathrm{f}2}-b_2)}{2(Q_{\mathrm{f}11}\eta^2 -2Q_{\mathrm{f}12}\eta + Q_{\mathrm{f}22})} & \text{if } \text{take off}, \\
		+\infty & \text{if } \text{top of climb and cruise}.
	\end{cases}
\end{equation}
However, feasibility is determined with respect to the operational and thermal constraints. Therefore, if $P_{\mathrm{fc,unc}}^*$ lies beyond the feasible set, KKT conditions guarantee that the optimal solution occurs at an active constraint. 
\fi
\ifjournal
\else
The KKT conditions associated with Problem~\ref{prob:main} are given as
\begin{align}
&\frac{\mathrm{\partial}L}{\mathrm{\partial}P_{\mathrm{fc}}} = 0, \label{eq:station}\\
&\mu_i \cdot g_i = 0, \label{eq:cs} \\
&\mu_i \geq 0, \label{eq:dual}\\  
&g_i \leq 0  \label{eq:primal},  i = 1, 2, 3, ..., 18,
\end{align}
where~\cref{eq:station} denotes the stationary condition,~\cref{eq:cs} is the complementary slackness,~\cref{eq:dual} is the dual feasibility, and~\cref{eq:primal} is the primal feasibility. 
The constraint vector $\mathbf{g}$ collects the operational boundaries on lifting variables as well as the powers and inputs.
We define the corresponding Lagrangian as
\begin{equation} \label{eq:L1}
	L = \dot{m}_{\mathrm{f,fc}}+\dot{m}_{\mathrm{f,gt}} + \mathbf{\mu}^{\top}\cdot\mathbf{g} 
\end{equation}
where $\mathbf{\mu}$ denotes the vector of Lagrange multipliers associated with the inequality constraints. 

Applying the stationary condition with respect to the decision variable $P_{\mathrm{fc}}$ gives
\begin{equation}\label{eq:pL}
	\frac{\mathrm{\partial}L}{\mathrm{\partial}P_{\mathrm{fc}}} = \frac{\mathrm{\partial}(\dot{m}_{\mathrm{f,fc}}+\dot{m}_{\mathrm{f,gt}})}{\mathrm{\partial}P_{\mathrm{fc}}} + \mathbf{\mu}^{\top} \frac{\mathrm{\partial}\mathbf{g}
	}{\mathrm{\partial}P_{\mathrm{fc}}},
\end{equation}
which leads to a piecewise analytical expression for the optimal power management policy after implementing~\cref{eq:cs}-~\cref{eq:primal}.
As a result, when only affine constraints are active, the optimal value follows the closed form expression of 
\begin{equation}\label{eq:lin}
	P_{\mathrm{fc, lin}}^* = \frac{x_{j,\mathrm{max/min}}- \theta_{j,0} - \theta_{j,1} (P_{\mathrm{req}}+\eta P_{\mathrm{aux}})}{-\theta_{j,1}\eta +\theta_{j,2}} 
\end{equation}
and if a quadratic constraint is active then the optimal power for fuel cell is obtained from the scalar quadratic root as
\begin{equation}
	P_{\mathrm{fc,quad}}^* = \frac{-b_j \pm \sqrt{b_j^2-4a_jc_j}}{2a_j},
\end{equation}
with coefficients
\begin{align}
	a_j &= Q_{j,22} - 2Q_{j,12}\eta+Q_{j,11}\eta^2, \\
	b_j &= 2(Q_{j,12}-Q_{j,11})(P_{\mathrm{req}} + \eta P_{\mathrm{aux}})-q_{j,1}\eta+q_{j,2},\\
	c_j &= Q_{j,11}(P_{\mathrm{req}}+\eta P_{\mathrm{aux}})^2 + q_{j,1}(P_{\mathrm{req}}+\eta P_{\mathrm{aux}}) + q_{j,0} - x_{j,\mathrm{max/min}}.
\end{align}
Therefore, when a linear constraint is active, the optimal fuel cell power reduces to the affine expression in~\cref{eq:lin}. Similarly, when a quadratic constraint is active, like the inlet temperature limit, the optimal power is obtained by solving a scalar quadratic equation.
\fi

\ifjournal
\else
Applying the KKT stationary condition
shows that the dominant active constraint is the SOFC inlet temperature upper bound in all three phases. Consequently, the corresponding optimal control policy is given by the inequality constraint of the $T_{\mathrm{in}}$ upper bound, which leads to  
\begin{equation}
	P_{\mathrm{fc, const}}^* = \frac{-b + \sqrt{b^2-4ac}}{2a},
\end{equation}
where
\begin{align}
	a &= Q_{22} - 2Q_{12}\eta+Q_{11}\eta^2, \\
	b &= 2(Q_{12}- Q_{11})(P_{\mathrm{req}} + \eta P_{\mathrm{aux}}) -q_{1}\eta+q_{2},\\
	c &= Q_{11}(P_{\mathrm{req}}+\eta P_{\mathrm{aux}})^2 + q_{1}(P_{\mathrm{req}}+\eta P_{\mathrm{aux}}) + q_0 - T_{\mathrm{in,max}},
\end{align}
where $Q$, $\mathbf{q}$ and $q_0$ are the mapping coefficients of the $T_{\mathrm{in}}$ model. Eventually, the \ifjournal effective\fi optimal policy is
\begin{equation}
	P_{\mathrm{fc,opt}}\ifjournal^{\mathrm{eff}}\fi = \min{\{P_{\mathrm{fc,const}}, P_{\mathrm{fc,unc}}\}}.
\end{equation}
The full calculations and methodology are described in the extended version of the paper~\citep{PakEtAl2026}.
\fi
\section{Results}\label{sec:result}
To validate the proposed analytical control policy, we compare the resulting power split with the result from the numerical NLP based integrated model~\citep{PerraEtAl2026}
, for all three phases and over a representative range of power demands. We show the comparison in~\cref{fig:power_split_all}, which presents the distribution of power between the gas turbine subsystem and the electrical components along a range of total power requirement during different flight phases. As the power requirement increases, the gas turbine power increases approximately linearly, while the electrical motor power remains almost constant in all three phases. Showing that the electrical motor power is primarily constrained by the SOFC inlet temperature rather than being affected by the total power demand. However, take off requires substantially higher power demand than cruise and top of climb and larger part of the power is provided by the gas turbine rather than the electrical power source. The analytical results closely matches the numerical results, which confirms that the simplified models that have been used for analytical solution are sufficient for the power management problem. 
\ifjournal
Figure~\ref{fig:fuel_all} shows the corresponding overall fuel consumption of the propulsion system. As the power requirement increases the fuel consumption is also rising linearly in cruise and top of climb phases. The fuel flow rate in take off shows a weakly convex monotonic increase, which was expected from the quadratic fuel flow rate map for GT. As the power requirement is higher in take off than the other phases, the fuel consumption is higher than in cruise and top of climb. Comparing the results from the analytical power management policy with the numerical optimal strategy, the NRMSEs are less than $0.7 \%$ for fuel consumption and less than $1.5 \%$ for power allocation, confirming that the analytical solution provides an accurate approximation over the relevant operating range. 


\ifjournal
\fi
\else
Figure~\ref{fig:fuel_all} shows the corresponding overall fuel consumption of the propulsion system. As the power requirement increases the fuel consumption is also rising linearly in cruise and top of climb phases. The fuel flow rate in take off shows a weakly convex monotonic increase, which was expected from the quadratic fuel flow rate map for GT. As the power requirement is higher in take off than the other phases, the fuel consumption is higher than in cruise and top of climb. Comparing the results from the analytical control policy with the numerical optimal strategy, the NRMSEs are less than $0.7 \%$ for fuel consumption and less than $1.5 \%$ for power allocation, confirming that the analytical solution provides an accurate approximation over the relevant operating range. 

\ifjournal
To validate the policy performance during all phases, we show the comparison between the powersplit result by the analytical policy with the numerical result with high fidelity models in~\cref{fig:combined}. The figure shows an acceptable precision in results for the desired power ranges for all three phases, despite the simple models that were utilized. At each operating point, the proposed static feedforward power management policy yields optimal power setpoints that closely match the numerical steady state optimal solution. 
%
%

The optimal powersplit in~\cref{fig:ps} depicts that the share of the electrical motor power 
during different flight phases all together.
In terms of computational efficiency, solving the optimization problem for $360$ different data points, the numerical optimization requires $214.23~\mathrm{s}$, whereas the analytical method requires $0.81~\mathrm{s}$. Both computations were performed in \textsc{MATLAB} on a computer equipped with a 12th Gen Intel\textsubscript{\textregistered} Core\textsuperscript{\texttrademark} i7-12700H (2.30~GHz) processor and a 16.0~GB of RAM, highlighting the computational advantage of the analytical approach. 
\fi
\fi
%
\begin{figure}[t]
	\begin{subfigure}[t]{0.48\textwidth}
		\includegraphics[width=\textwidth]{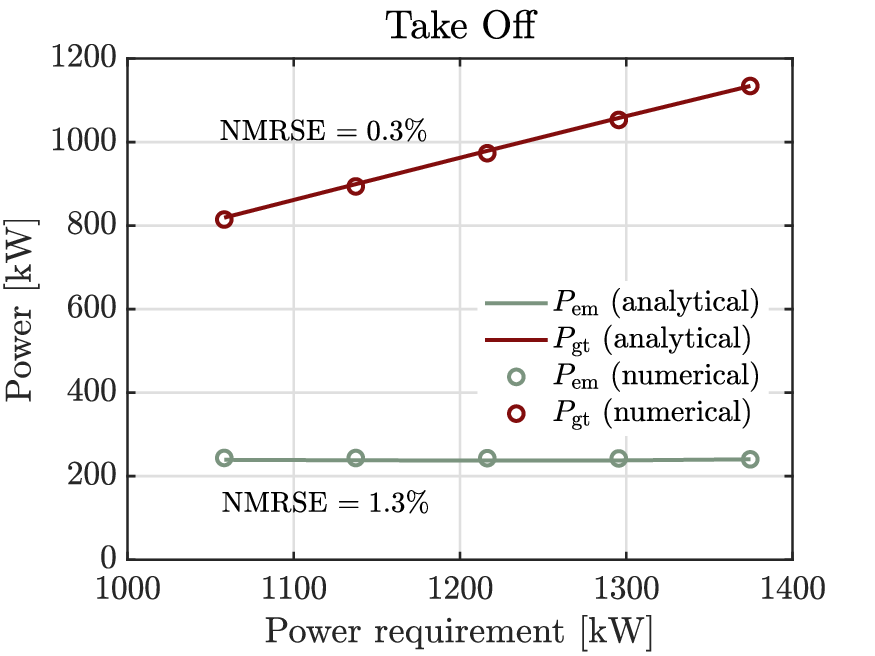}
		\caption{Take off}
		\label{fig:ps_to} 
	\end{subfigure} 
	\hfill
	\begin{subfigure}[t]{0.48\textwidth}
		\centering
		\includegraphics[width=\textwidth]{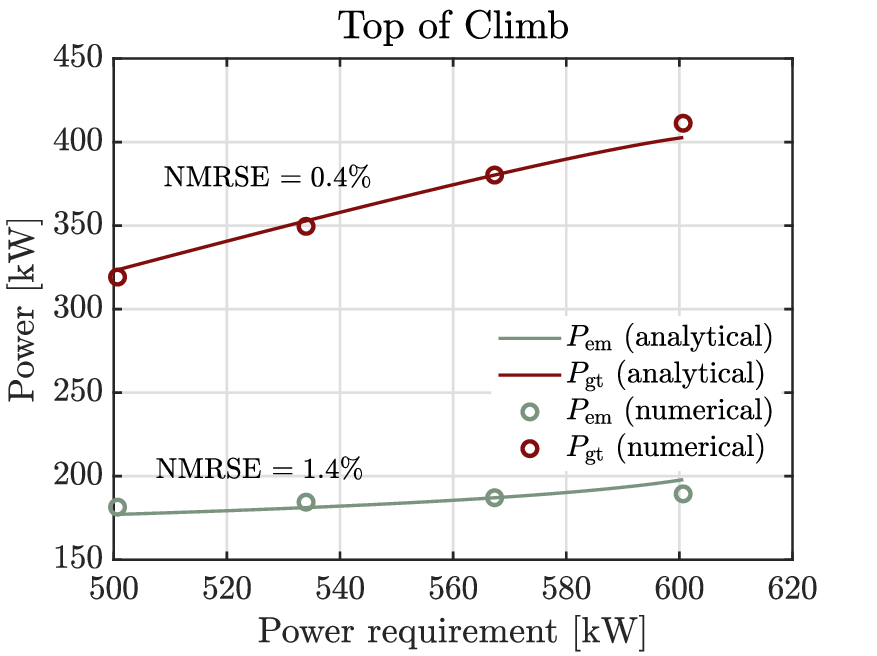}
		\caption{Top of climb}
		\label{fig:ps_toc}
	\end{subfigure} 
	\hfill
	\begin{subfigure}[t]{0.48\textwidth}
		\centering
		\includegraphics[width=\textwidth]{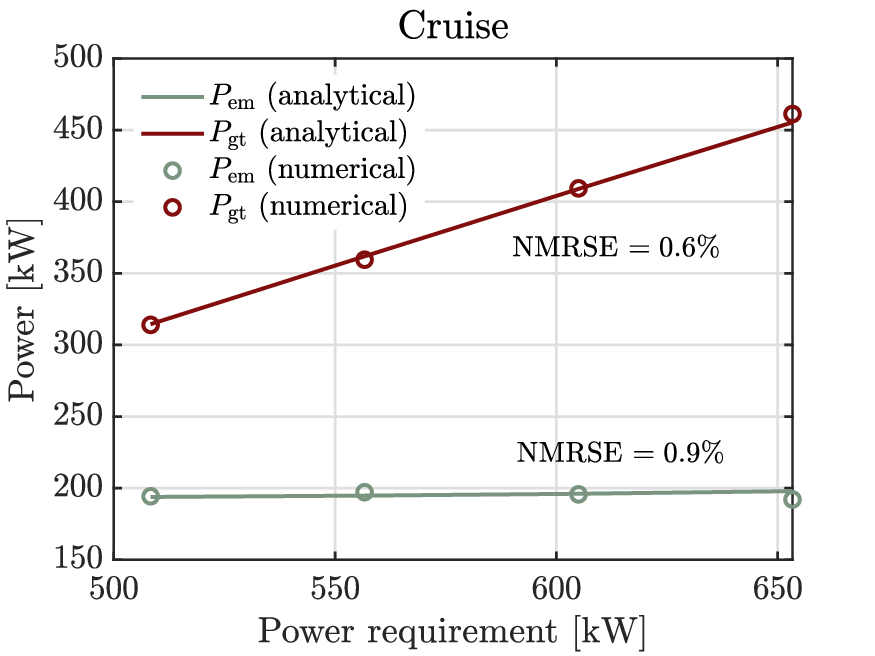}
		\caption{Cruise}
		\label{fig:ps_cr}
	\end{subfigure}
	
	\caption{Power split between subsystems for different power requirements across flight phases.}
	\label{fig:power_split_all}
\end{figure}
\begin{figure}[t]
	\centering
	\begin{subfigure}[t]{0.48\textwidth}
		\centering
		\includegraphics[width=\textwidth]{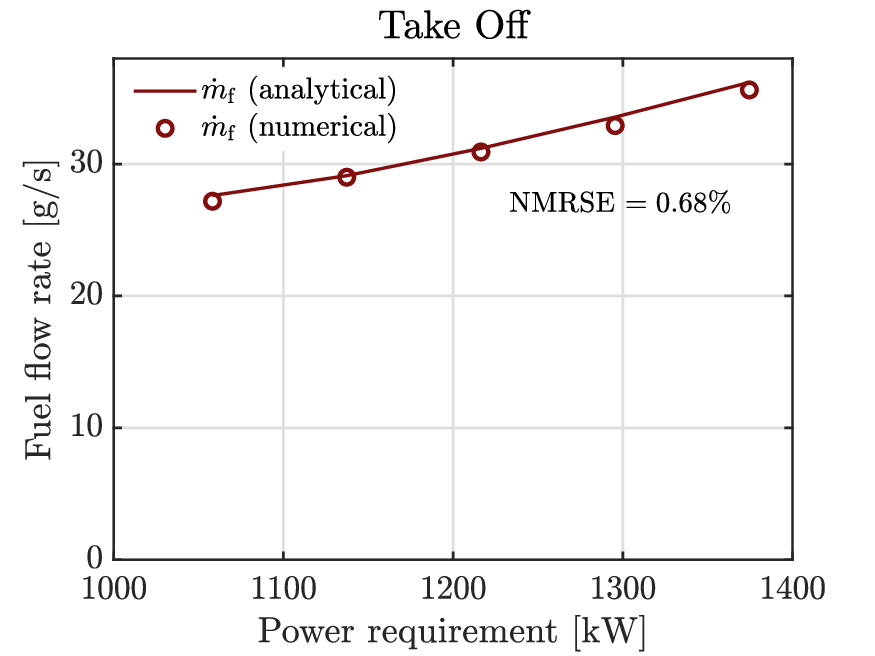}
		\caption{Take off}
		\label{fig:ps_to}
	\end{subfigure}
	\hfill
	\begin{subfigure}[t]{0.48\textwidth}
		\centering
		\includegraphics[width=\textwidth]{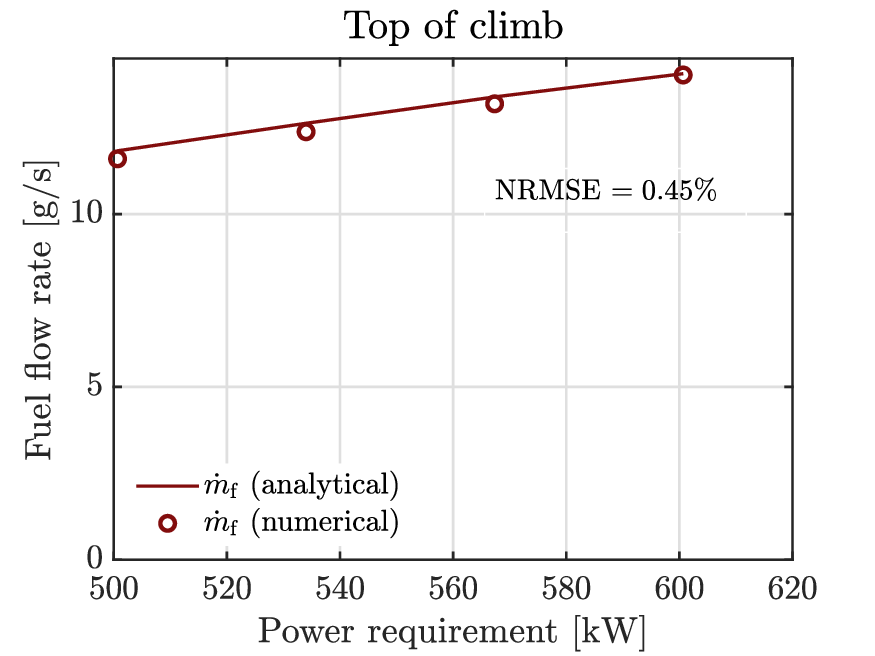}
		\caption{Top of climb}
		\label{fig:ps_toc}
	\end{subfigure}
	\hfill
	\begin{subfigure}[t]{0.48\textwidth}
		\centering
		\includegraphics[width=\textwidth]{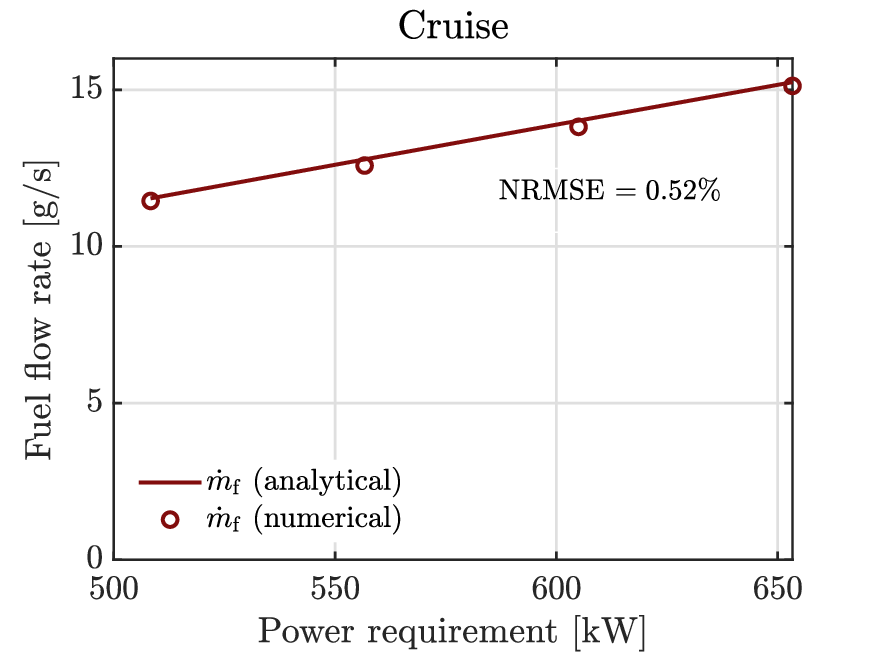}
		\caption{Cruise}
		\label{fig:ps_cr}
	\end{subfigure}
	
	\caption{Fuel flow rate for different power requirements across flight phases.}
	\label{fig:fuel_all}
\end{figure}
\begin{figure}[t]
	\centering
	\begin{subfigure}[t]{0.48\textwidth}
		\centering
		\includegraphics[width=\textwidth]{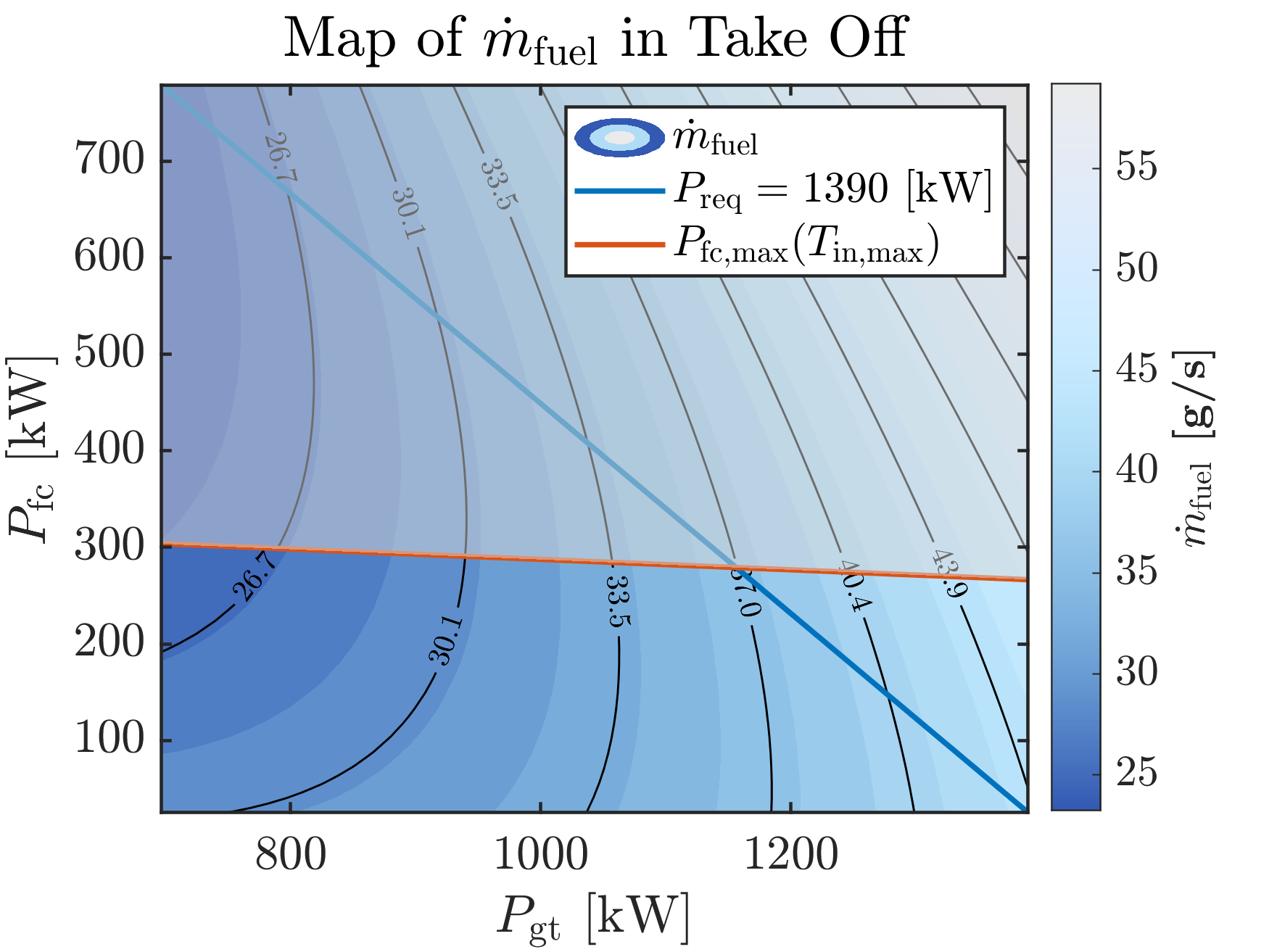}
		\caption{Take off}
		\label{fig:ps_to}
	\end{subfigure}
	\hfill
	\begin{subfigure}[t]{0.48\textwidth}
		\centering
		\includegraphics[width=\textwidth]{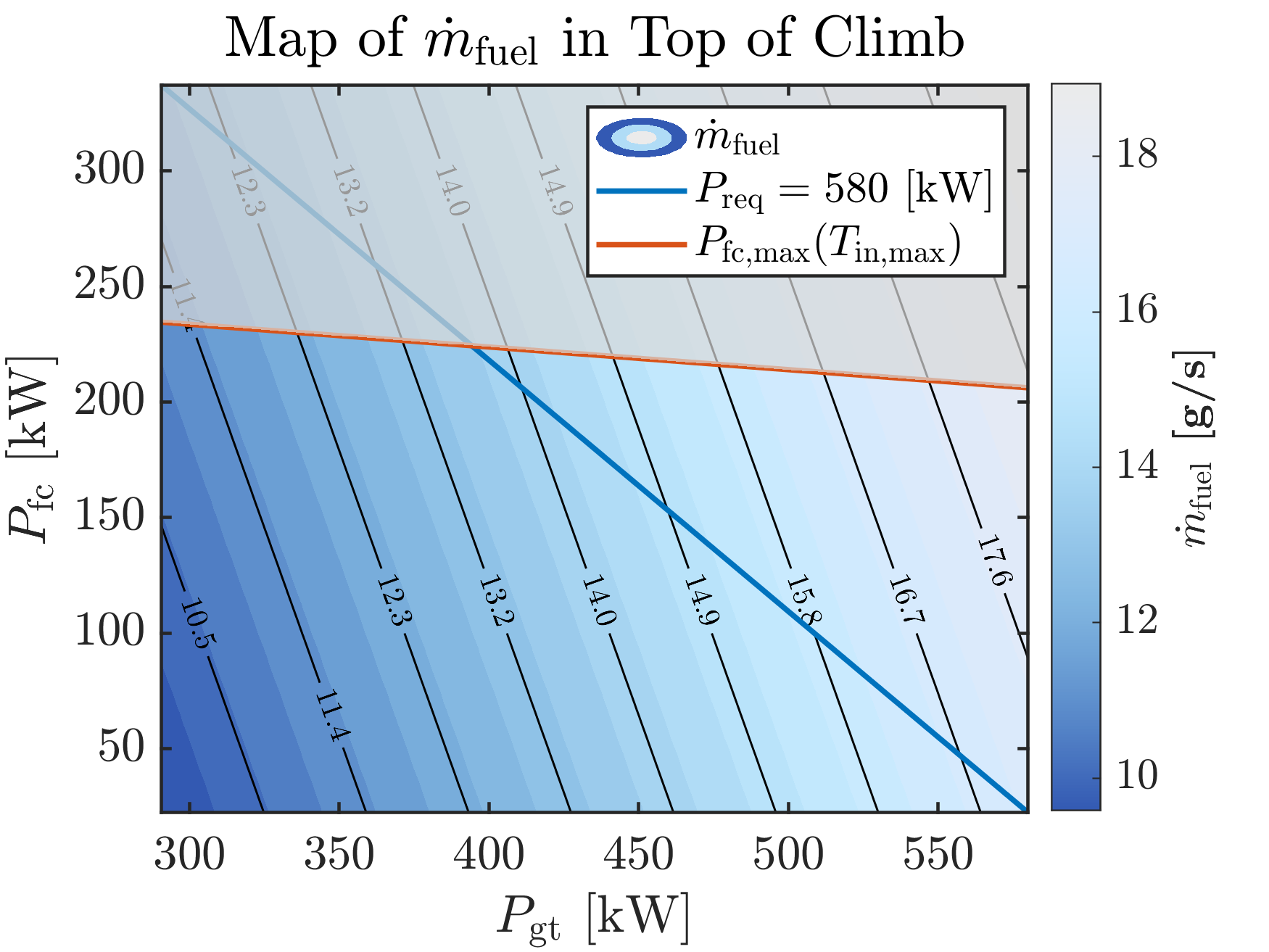}
		\caption{Top of climb}
		\label{fig:ps_toc}
	\end{subfigure}
	\hfill
	\begin{subfigure}[t]{0.48\textwidth}
		\centering
		\includegraphics[width=\textwidth]{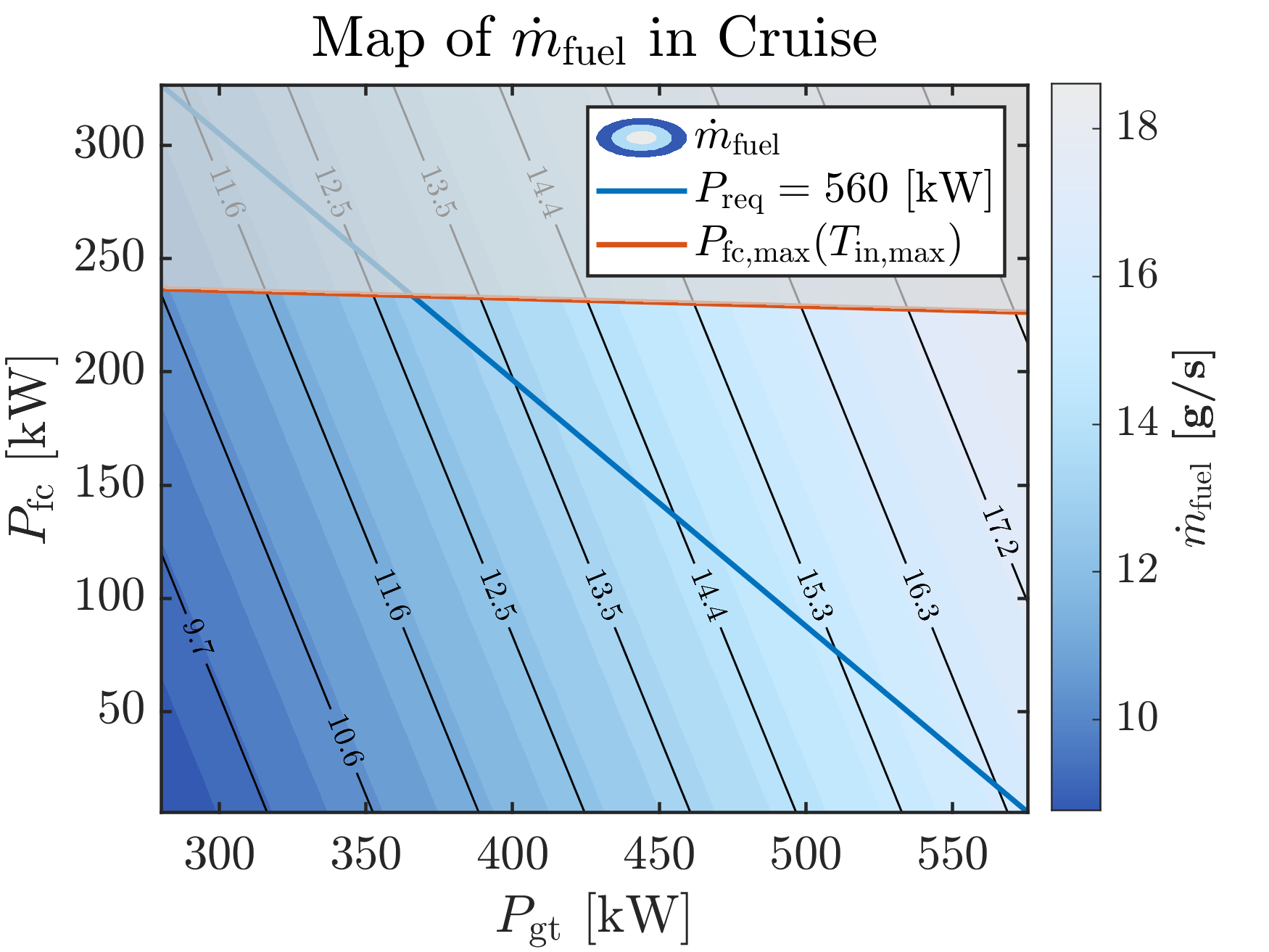}
		\caption{Cruise}
		\label{fig:ps_cr}
	\end{subfigure}
	
	\caption{Fuel flow rate contour maps for different flight phases. In each phase the constant power requirement is drawn in blue line to show how different power split affects the fuel consumption.}
	\label{fig:contour_all}
\end{figure}
\begin{figure}[t]
	\centering
	\vspace*{-5mm}
	\begin{subfigure}[t]{0.48\textwidth}
		\centering
		\includegraphics[width=\textwidth]{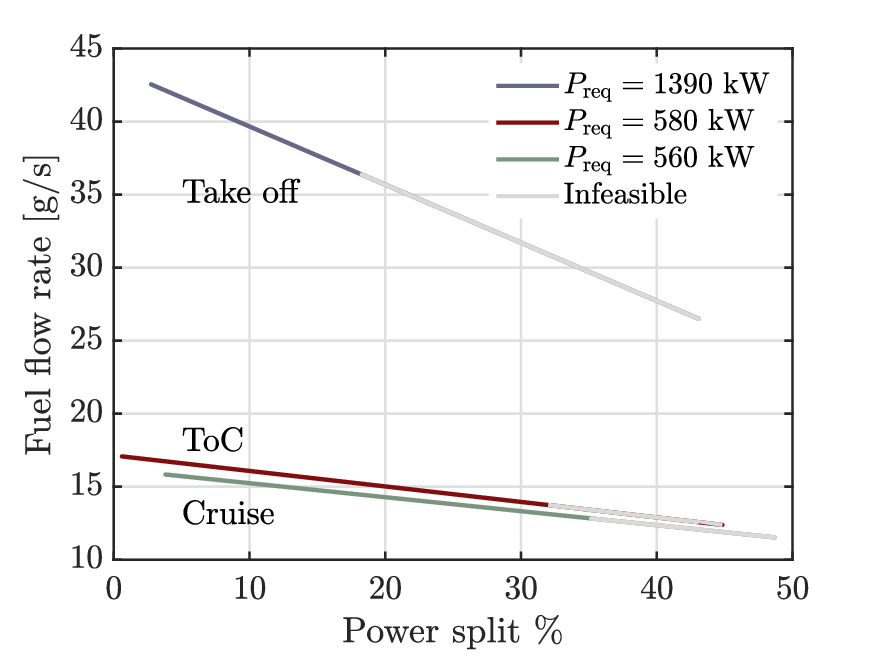}
		\label{fig:ps_mf}
	\end{subfigure}
	
	\caption{Fuel flow rate changes with respect to different power split schemes with three fixed power requirements.}
	\label{fig:ps_mfa}
\end{figure}
\begin{figure}[t]
	\centering
	\vspace*{-30mm}
	\begin{subfigure}[t]{0.48\textwidth}
		\centering
		\includegraphics[width=\textwidth]{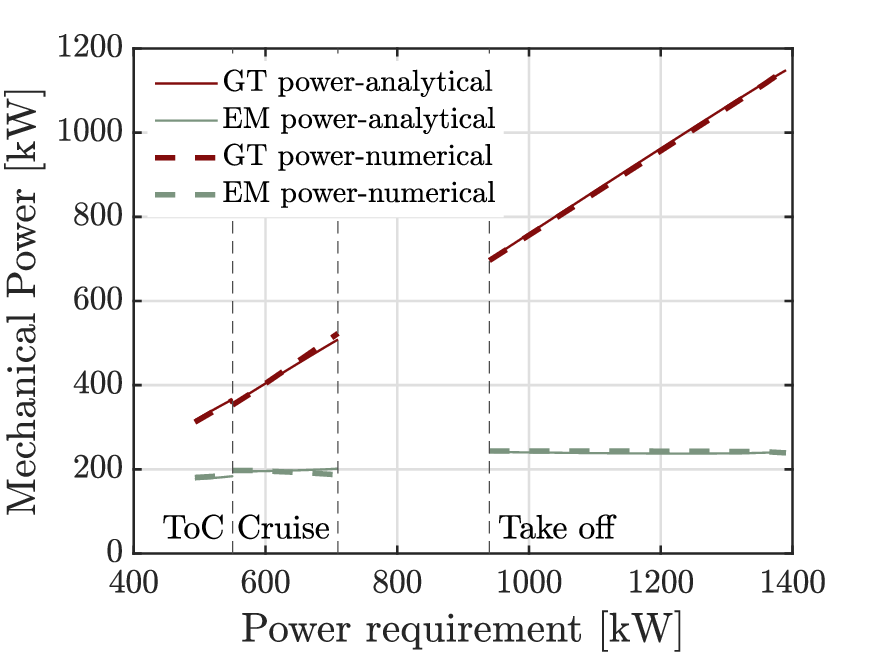}
		\caption{Power split for different power requirements}
		\label{fig:ps}
	\end{subfigure}
	\hfill
	\begin{subfigure}[t]{0.48\textwidth}
		\centering
		\includegraphics[width=\textwidth]{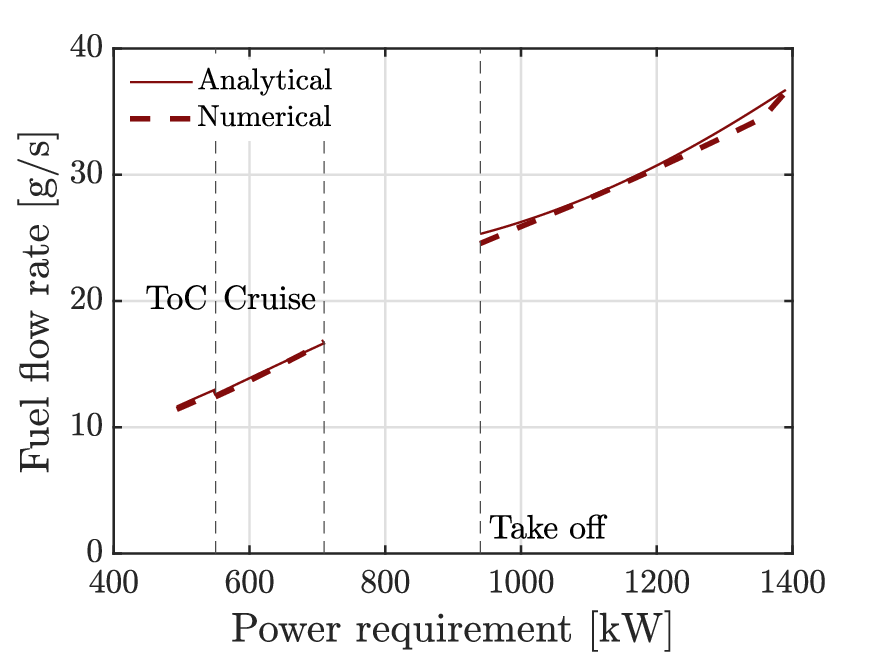}
		\caption{Fuel flow rate for different power requirements}
		\label{fig:ms}
	\end{subfigure}
	
	\caption{Comparison of power split and fuel flow rate for different power requirements.}
	\label{fig:combined}
	\vspace{-10pt}
\end{figure}
\ifjournal
%
The fuel flow rate maps in~\cref{fig:contour_all} illustrates the fuel consumption as function of gas turbine power and fuel cell power representing linear changes for top of climb and cruise and the quadratic behavior in take off. The maximum admissible fuel cell power is constraint by the fuel cell inlet temperature constraint and it is depicted as $P_{\mathrm{fc,max}}$. 

The contour plots show the feasible combination of gas turbine and fuel cell power that satisfy a given constant power requirement in each flight phase. The minimum fuel flow rates consistently occur along the upper bound for the fuel cell power resulting from the SOFC inlet temperature upper bound, which is interpreted that the fuel optimal operation is achieved by operating the SOFC at its maximum allowable power and supplying the remaining power demand with the gas turbine. This confirms that the optimal solution shows boundary seeking behavior with respect to the inlet temperature upper bound.

Moreover,~\cref{fig:ps_mfa} further illustrates the effect of power split on fuel consumption for three fixed total power requirements. The results show that increasing the fuel cell power consistently reduces the overall fuel flow rate, which is caused due to the higher efficiency of the SOFC compared to GT. However, this trend holds up to the maximum admissible fuel cell power, constrained by its inlet temperature. Any additional power demand beyond that is supplied by the GT.
To validate the policy performance during all phases, we show the comparison between the powersplit result by the analytical policy with the numerical result with high fidelity models in~\cref{fig:combined}. The figure shows an acceptable precision in results for the desired power ranges for all three phases, despite the simple models that were utilized. At each operating point, the proposed static feedforward power management policy yields optimal power setpoints that closely match the numerical steady state optimal solution. 
%

The optimal powersplit in~\cref{fig:ps} depicts that the share of the electrical motor power 
during different flight phases all together.
In terms of computational efficiency, solving the optimization problem for $360$ different data points, the numerical optimization requires $214.23~\mathrm{s}$, whereas the analytical method requires $0.81~\mathrm{s}$. Both computations were performed in \textsc{MATLAB} on a computer equipped with a 12th Gen Intel\textsubscript{\textregistered} Core\textsuperscript{\texttrademark} i7-12700H (2.30~GHz) processor and a 16.0~GB of RAM, highlighting the computational advantage of the analytical approach. 
\else
The fuel flow rate maps in~\cref{fig:contour_all} illustrates the fuel consumption as function of gas turbine power and fuel cell power representing linear changes for top of climb and cruise and the quadratic behavior in take off. The maximum admissible fuel cell power is constraint by the fuel cell inlet temperature constraint and it is depicted as $P_{\mathrm{fc,max}}$. The contour plots show the feasible combination of gas turbine and fuel cell power that satisfy a given constant power requirement in each flight phase. The minimum fuel flow rates consistently occur along the upper bound for the fuel cell power resulting from the SOFC inlet temperature upper bound, which is interpreted that the fuel optimal operation is achieved by operating the SOFC at its maximum allowable power and supplying the remaining power demand with the gas turbine. This confirms that the optimal solution shows boundary seeking behavior with respect to the inlet temperature upper bound.

Moreover,~\cref{fig:ps_mfa} further illustrates the effect of power split on fuel consumption for three fixed total power requirements. The results show that increasing the fuel cell power consistently reduces the overall fuel flow rate, which is caused due to the higher efficiency of the SOFC compared to GT. However, this trend holds up to the maximum admissible fuel cell power, constrained by its inlet temperature. Any additional power demand beyond that is supplied by the GT.

To validate the policy performance during all phases, we show the comparison between the powersplit result by the analytical policy with the numerical result with high fidelity models in~\cref{fig:combined}. The figure shows an acceptable precision in results for the desired power ranges for all three phases, despite the simple models that were utilized. At each operating point, the proposed static feedforward power management policy yields optimal power setpoints that closely match the numerical steady state optimal solution. 
%
%

The optimal powersplit in~\cref{fig:ps} depicts that the share of the electrical motor power 
during different flight phases all together.
In terms of computational efficiency, solving the optimization problem for $360$ different data points, the numerical optimization requires $214.23~\mathrm{s}$, whereas the analytical method requires $0.81~\mathrm{s}$. Both computations were performed in \textsc{MATLAB} on a computer equipped with a 12th Gen Intel\textsubscript{\textregistered} Core\textsuperscript{\texttrademark} i7-12700H (2.30~GHz) processor and a 16.0~GB of RAM, highlighting the computational advantage of the analytical approach. 
\fi
\section{Discussion and Future Work}\label{sec:dis}
%
While the current work establishes a theoretical foundation for the analytical optimal power management policy in the hybrid propulsion system, future research will focus on implementing and validating this policy under steady state operating conditions on a test-bench.
\section{Conclusion} \label{sec:con}
This paper presented a minimum-fuel supervisory power management policy for a hydrogen based hybrid aero engine combining a gas turbine and a solid oxide fuel cell. This policy enables optimal power split decisions without iterative computation.  We showed that the optimization problem can be formulated such that it results to a unique solution through curvature characteristics of the variables. Therefore, it can be solved analytically under steady state condition within the admissible domain. 
By devising KKT conditions to solve the optimization problem, a closed-form power management policy is obtained, which requires no iterative numerical computation while maintaining accuracy. The results revealed a key insight, which indicates that the resulted minimum-fuel behavior consistently drives the system to operate the SOFC at its thermal limit, highlighting the important role of thermal conditioning and SOFC inlet temperature constraint in such a hybrid propulsion system's efficiency. This outcome provides a physical interpretation of the power management decisions.  

The main implication is that the supervisory policy for hybrid aero engine could be reduced to a heuristic rule with lower computational complexity. 
The validation against a high-fidelity nonlinear model demonstrates lower than $1.5\%$ NRMSE, which means that the analytical solution is aligned with the numerical power management strategy during the representative flight phases. Additionally, this policy is simple and fast to implement on-line and ensures optimality within the admissible range, supporting that it is suitable for a real-time supervisory control. Since the derived close-form policy enables computationally efficient real-time power management, it is a practical solution for on-board implementation in a hybrid SOFC/GT propulsion system.
This method is limited to the steady-state operation; however, the proposed framework provides a foundation for 
future extensions toward dynamic optimal control for transient maneuvers using optimal control methods such as model predictive control (MPC). 

\ifjournal

\fi

\section*{Acknowledgment}
In this paper, we used Microsoft Copilot as an editor tool.



\clearpage 



\appendix
\numberwithin{equation}{section}
\renewcommand{\thesubsection}{\thesection.\arabic{subsection}}
\appendix
\section{Appendix} \label{sec:ap}
The constrained solution is obtained by solving the KKT conditions~\crefrange{eq:cs}{eq:primal}. The constraint vector $\mathbf{g}$ in~\cref{eq:primal} collects the operational boundaries on lifting variables as well as the powers and inputs,
\begin{equation}\label{eq:gexp}
	\mathbf{g} = \begin{bmatrix}
		P_{\mathrm{gt,min}} - P_{\mathrm{gt}}\\
		P_{\mathrm{gt}} - P_{\mathrm{gt,max}} \\ P_{\mathrm{fc,min}} - P_{\mathrm{fc}}\\
		P_{\mathrm{fc}} - P_{\mathrm{fc,max}} \\ P_{\mathrm{em,min}} - P_{\mathrm{em}}\\
		P_{\mathrm{em}} - P_{\mathrm{em,max}} \\ T_{\mathrm{in,min}}-T_{\mathrm{in}} \\ 
		T_{\mathrm{in}} - T_{\mathrm{in,max}} \\ T_{\mathrm{out,min}}-T_{\mathrm{out}} \\ 
		T_{\mathrm{out}} - T_{\mathrm{out,max}} \\
		\dot{m}_{\mathrm{f,fc, min}} - \dot{m}_{\mathrm{f,fc}} \\ \dot{m}_{\mathrm{f,fc}} - \dot{m}_{\mathrm{f,fc, max}}\\
		\dot{m}_{\mathrm{f,gt, min}} - \dot{m}_{\mathrm{f,gt}} \\ \dot{m}_{\mathrm{f,gt}} - \dot{m}_{\mathrm{f,gt, max}} \\
		\dot{m}_{\mathrm{b, min}} - \dot{m}_{\mathrm{b}} \\ \dot{m}_{\mathrm{b}} - \dot{m}_{\mathrm{b, max}} \\
		T_{\mathrm{hpc, min}} - T_{\mathrm{hpc}} \\ T_{\mathrm{hpc}} - T_{\mathrm{hpc, max}} \\
		T_{\mathrm{et, min}} - T_{\mathrm{et}} \\ T_{\mathrm{et}} - T_{\mathrm{et, max}} \\
        		 \end{bmatrix}.
\end{equation}
We expand the Lagrangian in~\cref{eq:L1} according to~\cref{eq:gexp}, which yields to
\begin{equation} \label{eq:h}
\begin{aligned}
   {L}=&\dot{m}_{\mathrm{f,fc}}+\dot{m}_{\mathrm{f,gt,tank}}+\mu_1(P_{\mathrm{gt,min}} - P_{\mathrm{gt}}) + \\ 
   &\mu_2(P_{\mathrm{gt}} - P_{\mathrm{gt,max}}) +  \mu_3(P_{\mathrm{fc,min}} - P_{\mathrm{fc}}) + \\ &\mu_4(P_{\mathrm{fc}} - P_{\mathrm{fc,max}}) +\mu_5(P_{\mathrm{em,min}} - P_{\mathrm{em}}) + \\ &\mu_6(P_{\mathrm{em}} - P_{\mathrm{em,max}}) + \mu_7(T_{\mathrm{in,min}}-T_{\mathrm{in}}) +  \\ &\mu_8(T_{\mathrm{in}} - T_{\mathrm{in,max}}) +\mu_9(\dot{m}_{\mathrm{f,fc, min}} - \dot{m}_{\mathrm{f,fc}}) +\\  
   &\mu_{10}(\dot{m}_{\mathrm{f,fc}} - \dot{m}_{\mathrm{f,fc, max}}) + \mu_{11}(\dot{m}_{\mathrm{f,gt, min}} - \dot{m}_{\mathrm{f,gt}})  + \\
   &\mu_{12}(\dot{m}_{\mathrm{f,gt}} - \dot{m}_{\mathrm{f,gt, max}}) +\mu_{13}(\dot{m}_{\mathrm{b, min}} - \dot{m}_{\mathrm{b}}) + \\ 
   &\mu_{14}(\dot{m}_{\mathrm{b}} - \dot{m}_{\mathrm{b, max}}) + \mu_{15}(T_{\mathrm{hpc, min}} - T_{\mathrm{hpc}}) +  \\ 
   &\mu_{16}(T_{\mathrm{hpc}} - T_{\mathrm{hpc, max}}) + \mu_{17}(T_{\mathrm{et, min}} - T_{\mathrm{et}}) + \\ 
   &\mu_{18}(T_{\mathrm{et}} - T_{\mathrm{et, max}})+  \mu_{19}(T_{\mathrm{out,min}}-T_{\mathrm{out}}) +\\ 
   &\mu_{20}(T_{\mathrm{out}} - T_{\mathrm{out,max}}).
\end{aligned}
\end{equation}
However, we can shorten the problem size in order to decrease the offline computational effort for each operating point.
\subsection{Take off}
During the take off phase, we combine the affine constraints into effective upper and lower bounds on the fuel cell power, denoted as $P_{\mathrm{fc,max}}$ and $P_{\mathrm{fc,min}}$, respectively, as
\begin{equation}
	\begin{aligned}
		P_{\mathrm{fc,max}}^{\mathrm{eff}} = &
		\min\Bigg\{
		P_{\mathrm{fc,max}},\;
		\frac{\left(P_{\mathrm{req}}+\eta P_{\mathrm{aux}} - P_{\mathrm{gt,min}}\right)}{\eta},\; \\
		&\frac{\dot{m}_{\mathrm{f,fc,max}} - b_0 - b_1(P_{\mathrm{req}}+\eta P_{\mathrm{aux}})}{-b_1\eta + b_2},\\[6pt]
		&\frac{T_{\mathrm{out,max}} - h_0 - h_1(P_{\mathrm{req}}+\eta P_{\mathrm{aux}})}{-h_1\eta + h_2},\; \frac{P_{\mathrm{em,max}}}{\eta}\\
		& + P_{\mathrm{aux}},\; \frac{T_{\mathrm{hpc,max}} - t_0 - t_1(P_{\mathrm{req}}+\eta P_{\mathrm{aux}})}{-t_1\eta + t_2}
		\Bigg\}
	\end{aligned}
\end{equation} 
\begin{equation}
	\begin{aligned}
		P_{\mathrm{fc,min}}^{\mathrm{eff}} = &
		\max\Bigg\{
		P_{\mathrm{fc,min}},\;
		\frac{(P_{\mathrm{req}}+\eta P_{\mathrm{aux}} - P_{\mathrm{gt,max}})}{\eta},\; \\
		&\frac{\dot{m}_{\mathrm{f,fc,min}} - b_0 - b_1(P_{\mathrm{req}}+\eta P_{\mathrm{aux}})}{-b_1\eta + b_2},\\[6pt]
		&\frac{T_{\mathrm{out,min}} - h_0 - h_1(P_{\mathrm{req}}+\eta P_{\mathrm{aux}})}{-h_1\eta + h_2},\; \frac{P_{\mathrm{em,min}}}{\eta} \\
		&+ P_{\mathrm{aux}},\; \frac{T_{\mathrm{hpc,min}} - t_0 - t_1(P_{\mathrm{req}}+\eta P_{\mathrm{aux}})}{-t_1\eta + t_2}
		\Bigg\}.
	\end{aligned}
\end{equation}
This bounds are obtained by enforcing the power requirements, the power balance~\cref{eq:req}, electrical components limits, fuel flow rate constraints and temperature bounds.
By using the effective bounds, we reduce the original constraint set in Problem~\ref{prob:main} to a smaller set of constraints presented in $\mathbf{g}^{\mathrm{eff}}$ as
\begin{equation}
	\mathbf{g}^{\mathrm{eff}} = \begin{bmatrix}
		P_{\mathrm{fc,min}}^{\mathrm{eff}} - P_{\mathrm{fc}}\\
		P_{\mathrm{fc}} - P_{\mathrm{fc,max}}^{\mathrm{eff}}  \\ T_{\mathrm{in,min}}-T_{\mathrm{in}}(P_{\mathrm{fc}}) \\ 
		T_{\mathrm{in}}(P_{\mathrm{fc}}) - T_{\mathrm{in,max}} \\\dot{m}_{\mathrm{b,min}}-\dot{m}_{\mathrm{b}}(P_{\mathrm{fc}}) \\ 
		\dot{m}_{\mathrm{b}}(P_{\mathrm{fc}}) - \dot{m}_{\mathrm{b,max}}\\\dot{m}_{\mathrm{f,gt,min}}-\dot{m}_{\mathrm{f,gt}}(P_{\mathrm{fc}}) \\ 
		\dot{m}_{\mathrm{f,gt}}(P_{\mathrm{fc}}) - \dot{m}_{\mathrm{f,gt,max}} 
	\end{bmatrix}
\end{equation}
This reduction simplifies the KKT analysis while preserving all critical operational limits. The associated Lagrangian is written as
\begin{equation} \label{eq:h}
	\begin{aligned}
		{L}= &\dot{m}_{\mathrm{f,fc}}+\dot{m}_{\mathrm{f,gt}}+\mu^{\mathrm{eff}}_1(P_{\mathrm{fc,min}}^{\mathrm{eff}} - P_{\mathrm{fc}}) +  \\ 
		&\mu^{\mathrm{eff}}_2(P_{\mathrm{fc}} - P_{\mathrm{fc,max}}^{\mathrm{eff}}) + \mu^{\mathrm{eff}}_3(T_{\mathrm{in,min}}-T_{\mathrm{in}}) + \\
		&\mu^{\mathrm{eff}}_4(T_{\mathrm{in}} - T_{\mathrm{in,max}})+ \mu^{\mathrm{eff}}_{5}(\dot{m}_{\mathrm{b, min}} - \dot{m}_{\mathrm{b}}) + \\ &\mu^{\mathrm{eff}}_{6}(\dot{m}_{\mathrm{b}} - \dot{m}_{\mathrm{b, max}}) + \mu^{\mathrm{eff}}_{7}(\dot{m}_{\mathrm{f,gt, min}} - \dot{m}_{\mathrm{f,gt}})+ \\ &\mu^{\mathrm{eff}}_{8}(\dot{m}_{\mathrm{f,gt}} - \dot{m}_{\mathrm{f,gt, max}}) ),
	\end{aligned}
\end{equation}
where $\mu^{\mathrm{eff}}_i$ denotes the Lagrange multiplier in the reduced KKT problem. The cost function for the take-off is
\begin{equation}
	\begin{aligned}
		J = &\dot{m}_{\mathrm{f}} = b_0 + b_1\cdot (P_{\mathrm{req}} - \eta(P_{\mathrm{fc}} - P_{\mathrm{aux}})) + b_2\cdot P_{\mathrm{fc}} + \\ 
		&\left(Q_{\mathrm{f}11}\eta^2 - 2Q_{\mathrm{f}12}\eta+Q_{\mathrm{f}22}\right) P_{\mathrm{fc}}^2 - \\   
		&\left(2(Q_{\mathrm{f}12}-Q_{\mathrm{f}11}\eta)(P_{\mathrm{req}} + \eta P_{\mathrm{aux}}) -q_{\mathrm{f}1}\eta+q_{\mathrm{f}2}\right)\\ 
		&P_{\mathrm{fc}} +  Q_{\mathrm{f}11}(P_{\mathrm{req}}+\eta P_{\mathrm{aux}})^2 +\\ &q_{\mathrm{f}1}(P_{\mathrm{req}}+\eta P_{\mathrm{aux}}) + q_{\mathrm{f,0}}.
	\end{aligned}
\end{equation}
Solving unconstrained problem, $\frac{\partial J}{\partial P_{\mathrm{fc}}} = 0 $, yields the unconstrained optimal solution
\begin{equation}
	\begin{aligned}
	P_{\mathrm{fc,unc}}^* = &\frac{2(Q_{\mathrm{f}11}\eta- Q_{\mathrm{f}12})(P_{\mathrm{req}}+\eta P_{\mathrm{aux}})}{2(Q_{\mathrm{f}11}\eta^2 -2Q_{\mathrm{f}12}\eta} \\ 
	&\frac{+(b_1+ q_{\mathrm{f}1})\eta - (q_{\mathrm{f}2}+b_2)}{+ Q_{\mathrm{f}22})}.
	\end{aligned}
\end{equation}
However, according to KKT condition the constrained optimal policy is achieved by adapting~\cref{eq:pL} to
\begin{equation}\label{eq:pL22}
	\frac{\mathrm{\partial}L}{\mathrm{\partial}P_{\mathrm{fc}}} = \frac{\mathrm{\partial}(\dot{m}_{\mathrm{f,fc}}+\dot{m}_{\mathrm{f,gt}})}{\mathrm{\partial}P_{\mathrm{fc}}} + \mathbf{\bm{\mu}^{\mathrm{eff}}}^{\top} \cdot \frac{\mathrm{\partial}\mathbf{g}^{\mathrm{eff}}}{\mathrm{\partial}P_{\mathrm{fc}}}.
\end{equation}
Solving~\cref{eq:pL22} reveals that the power requirement is distributed between the subsystems, pushing SOFC to produce power as much as possible, while it is constrained by its inlet temperature upper bound, i.e., $T_{\mathrm{in}}\leq T_{\mathrm{in,max}}$. Consequently, the optimal control policy is given by the constrained solution when $g^{\mathrm{eff}}_4 = 0$ is active, equals to  
\begin{equation} \label{eq:pfcto}
	P_{\mathrm{fc, const}}^* = \frac{-b + \sqrt{b^2-4ac}}{2a},
\end{equation}
where the $a$, $b$ and $c$ coefficients are
\begin{align}
	a &= Q_{22} - 2Q_{12}\eta+Q_{11}\eta^2, \label{eq:aT} \\
	b &= 2(Q_{12}- \eta Q_{11})(P_{\mathrm{req}} + \eta P_{\mathrm{aux}}) -q_{1}\eta+q_{2}, \label{eq:bT}\\
	c &= Q_{11}(P_{\mathrm{req}}+\eta P_{\mathrm{aux}})^2 + q_{1}(P_{\mathrm{req}}+\eta P_{\mathrm{aux}}) + \notag\\
	&\quad q_0 - T_{\mathrm{in,max}}.\label{eq:cT}
\end{align}
To calculate the active constraint's corresponding Lagrange multiplier we substitute the optimal fuel cell power in the stationarity condition as
\begin{equation}
	\begin{aligned}
		0 = &\frac{\partial L}{\partial P_{\mathrm{fc}}} = -\eta b_1 + b_2 + 2(Q_{\mathrm{f}11}\eta^2 - 2Q_{\mathrm{f}12}\eta+Q_{\mathrm{f}22})\\ 
		&P_{\mathrm{fc,opt}}^{\mathrm{eff}} - 2Q_{\mathrm{f}11}\eta(P_{\mathrm{req}} + \eta P_{\mathrm{aux}}) +2Q_{\mathrm{f}12}(P_{\mathrm{req}}+\eta P_{\mathrm{aux}}) - \\ &q_{\mathrm{f}1}\eta+q_{\mathrm{f}2} + \mu_4^{\mathrm{eff}}( 2(Q_{11}\eta^2 -2 Q_{12}\eta+Q_{22})P_{\mathrm{fc,opt}}^{\mathrm{eff}} + \\ 
		&2(Q_{12}-Q_{11}\eta)(P_{\mathrm{req}} + \eta P_{\mathrm{aux}})-q_{1}\eta+q_{2}).
	\end{aligned}
\end{equation}
Therefore the dual multiplier $\mu_4^{\mathrm{eff}}$ is calculated as
\begin{equation}
	\begin{aligned}
	\mu_4^{\mathrm{eff}} = &\frac{\eta b_1 - b_2 - 2(Q_{\mathrm{f}11}\eta^2 - 2Q_{\mathrm{f}12}\eta+Q_{\mathrm{f}22})P_{\mathrm{fc,opt}}^{\mathrm{eff}} }{2(Q_{11}\eta^2 - 2Q_{12}\eta+Q_{22})P_{\mathrm{fc,opt}}^{\mathrm{eff}} + } \\ 
	&\frac{+2( Q_{\mathrm{f}11}\eta-Q_{\mathrm{f}12})(P_{\mathrm{req}} + \eta P_{\mathrm{aux}}) +q_{\mathrm{f}1}\eta-q_{\mathrm{f}2}}{2(Q_{12}-Q_{11}\eta)(P_{\mathrm{req}} + \eta P_{\mathrm{aux}})-q_{1}\eta+q_{2}},
	\end{aligned}
\end{equation}
which represents its dependency on the power requirement $P_{\mathrm{req}}$ and the optimal fuel cell power $P_{\mathrm{fc,opt}}^{\mathrm{eff}}$.

\subsection{Top of climb and cruise}
As in take off case, we combine the affine constraints into effective upper and lower bounds on the fuel cell power as
\begin{equation}
	\begin{aligned}
		P_{\mathrm{fc,max}}^{\mathrm{eff}} = &
		\min\Bigg\{
		P_{\mathrm{fc,max}},\;
		\frac{\left(P_{\mathrm{req}}+\eta P_{\mathrm{aux}} - P_{\mathrm{gt,min}}\right)}{\eta},\; \\
		&\frac{\dot{m}_{\mathrm{f,fc,max}} - b_0 - b_1(P_{\mathrm{req}}+\eta P_{\mathrm{aux}})}{-b_1\eta + b_2},  \frac{P_{\mathrm{em,max}}}{\eta}\\[6pt]
		&+ P_{\mathrm{aux}},\; \frac{T_{\mathrm{out,max}} - h_0 - h_1(P_{\mathrm{req}}+\eta P_{\mathrm{aux}})}{-h_1\eta + h_2},\;\\
		&\frac{T_{\mathrm{hpc,max}} - t_0 - t_1(P_{\mathrm{req}}+\eta P_{\mathrm{aux}})}{-t_1\eta + t_2}, \; \\
		&\frac{\dot{m}_{\mathrm{f,gt,min}} - e_0 - e_1(P_{\mathrm{req}}+\eta P_{\mathrm{aux}})}{-e_1\eta + e_2}
		\Bigg\},
	\end{aligned}
\end{equation} 
\begin{equation}
	\begin{aligned}
		P_{\mathrm{fc,min}}^{\mathrm{eff}} =&
		\max\Bigg\{
		P_{\mathrm{fc,min}},\;
		\frac{(P_{\mathrm{req}}+\eta P_{\mathrm{aux}} - P_{\mathrm{gt,max}})}{\eta},\; \\
		&\frac{\dot{m}_{\mathrm{f,fc,min}} - b_0 - b_1(P_{\mathrm{req}}+\eta P_{\mathrm{aux}})}{-b_1\eta + b_2}, \frac{P_{\mathrm{em,min}}}{\eta}\\[6pt]
		& + P_{\mathrm{aux}},\; \frac{T_{\mathrm{out,min}} - h_0 - h_1(P_{\mathrm{req}}+\eta P_{\mathrm{aux}})}{-h_1\eta + h_2},\; \\
		&\frac{T_{\mathrm{hpc,min}} - t_0 - t_1(P_{\mathrm{req}}+\eta P_{\mathrm{aux}})}{-t_1\eta + t_2}, \; \\
		&\frac{\dot{m}_{\mathrm{f,gt,max}} - e_0 - e_1(P_{\mathrm{req}}+\eta P_{\mathrm{aux}})}{-e_1\eta + e_2}
		\Bigg\}.
	\end{aligned}
\end{equation}
Therefore, the reduced constraint vector $\mathbf{g}^{\mathrm{eff}}$ contains the dominant limits as
		\begin{equation}
			\mathbf{g}^{\mathrm{eff}} = \begin{bmatrix}
				P_{\mathrm{fc,min}}^{\mathrm{eff}} - P_{\mathrm{fc}}\\
				P_{\mathrm{fc}} - P_{\mathrm{fc,max}}^{\mathrm{eff}}  \\ T_{\mathrm{in,min}}-T_{\mathrm{in}}(P_{\mathrm{fc}}) \\ 
				T_{\mathrm{in}}(P_{\mathrm{fc}}) - T_{\mathrm{in,max}} \\\dot{m}_{\mathrm{b,min}}-\dot{m}_{\mathrm{b}}(P_{\mathrm{fc}}) \\ 
				\dot{m}_{\mathrm{b}}(P_{\mathrm{fc}}) - \dot{m}_{\mathrm{b,max}}
			\end{bmatrix}.
		\end{equation}
		And the corresponding Lagrangian is written as
		\begin{equation} \label{eq:Ltoc}
			\begin{aligned}
				{L}=&\dot{m}_{\mathrm{f,fc}}+\dot{m}_{\mathrm{f,gt}}+\mu^{\mathrm{eff}}_1(P_{\mathrm{fc,min}}^{\mathrm{eff}} - P_{\mathrm{fc}}) + \\ 
				&\mu^{\mathrm{eff}}_2(P_{\mathrm{fc}} - P_{\mathrm{fc,max}}^{\mathrm{eff}}) + \mu^{\mathrm{eff}}_3(T_{\mathrm{in,min}}-T_{\mathrm{in}}) + \\ &\mu^{\mathrm{eff}}_4(T_{\mathrm{in}} - T_{\mathrm{in,max}})+\mu^{\mathrm{eff}}_{5}(\dot{m}_{\mathrm{b, min}} - \dot{m}_{\mathrm{b}}) + \\ 
				&\mu^{\mathrm{eff}}_{6}(\dot{m}_{\mathrm{b}} - \dot{m}_{\mathrm{b, max}}).
			\end{aligned}
		\end{equation}
		Applying the KKT stationarity condition
		\begin{equation}\label{eq:pL2toc}
			\frac{\mathrm{\partial}L}{\mathrm{\partial}P_{\mathrm{fc}}} = \frac{\mathrm{\partial}(\dot{m}_{\mathrm{f,fc}}+\dot{m}_{\mathrm{f,gt}})}{\mathrm{\partial}P_{\mathrm{fc}}} + \bm{\mu}^{{\mathrm{eff}}^\top} \cdot \frac{\mathrm{\partial}\mathbf{g}^{\mathrm{eff}}}{\mathrm{\partial}P_{\mathrm{fc}}},
		\end{equation}
		shows again that the dominant active constraint is the SOFC inlet temperature upper bound. Consequently, the corresponding optimal control policy is given by $g^{\mathrm{eff}}_4 = 0$, equals to  
		\begin{equation}\label{eq:pfctoc}
			P_{\mathrm{fc, const}}^* = \frac{-b + \sqrt{b^2-4ac}}{2a},
		\end{equation}
		where
		\begin{align}
			a &= Q_{22} - 2Q_{12}\eta+Q_{11}\eta^2, \\
			b &= 2(Q_{12}- \eta Q_{11})(P_{\mathrm{req}} + \eta P_{\mathrm{aux}}) -q_{1}\eta+q_{2},\\
			c &= Q_{11}(P_{\mathrm{req}}+\eta P_{\mathrm{aux}})^2 + q_{1}(P_{\mathrm{req}}+\eta P_{\mathrm{aux}}) + \notag\\
			&\quad q_0 - T_{\mathrm{in,max}}.
		\end{align}
\printcredits

\bibliographystyle{model1-num-names}

\bibliography{bibliography.bib}



\end{document}